\documentclass[format=acmsmall, review=false]{acmart}
\usepackage{acm-ec-25}
\usepackage{booktabs} 
\usepackage[ruled]{algorithm2e} 

\SetAlFnt{\small}
\SetAlCapFnt{\small}
\SetAlCapNameFnt{\small}
\SetAlCapHSkip{0pt}
\IncMargin{-\parindent}

\usepackage{bbm}
\usepackage{cleveref}
\usepackage{enumitem}
\crefname{subsection}{subsection}{subsections}
\usepackage{booktabs}
\usepackage{tabularx}
\usepackage{array}

\newcommand{\argmax}{\mathrm{argmax}}

\newcommand{\OPT}{\mathsf{OPT}}

\newcommand{\bI}{\mathbbm{1}}
\newcommand{\bE}{\mathbb{E}}

\newcommand{\bR}{\mathbb{R}}

\newcommand{\cP}{\mathcal{P}}
\newcommand{\op}{\overline{p}}
\newcommand{\luid}{$\ell$-up-1-down\xspace}
\newcommand{\cV}{\mathcal{V}}
\newcommand{\zCust}{z^\mathsf{Cust}}
\newcommand{\zObs}{z^\mathsf{Obs}}

\usepackage{color}              
\usepackage{color-edits}
\addauthor{Will}{blue}
\addauthor[Jackie]{jb}{magenta}

\usepackage[english]{babel}
\usepackage[autostyle, english = american]{csquotes}
\MakeOuterQuote{"}

\setcitestyle{authoryear}

\title[]{Personalized Promotions in Practice: Dynamic Allocation and Reference Effects
}

\author{Jackie Baek}
\email{baek@stern.nyu.edu}
\author{Will Ma}
\email{wm2428@gsb.columbia.edu}
\author{Dmitry Mitrofanov}

\begin{abstract}
Partnering with a large online retailer, we consider the problem of sending daily personalized promotions to a userbase of over 20 million customers.
We propose an efficient policy for determining, every day, the promotion that each customer should receive (10\%, 12\%, 15\%, 17\%, or 20\% off), while respecting global allocation constraints.
This policy was successfully deployed to see a 4.5\% revenue increase  during an A/B test, by better targeting promotion-sensitive customers and also learning intertemporal patterns across customers.

We also consider theoretically modeling the intertemporal state of the customer.
The data suggests a simple new combinatorial model of pricing with reference effects, where the customer remembers the best promotion they saw over the past $\ell$ days as the "reference value", and is more likely to purchase if this value is poor.
We tightly characterize the structure of optimal policies for maximizing long-run average revenue under this model---they cycle between offering poor promotion values $\ell$ times and offering good values once.
\end{abstract}

\begin{document}

\begin{titlepage}

\maketitle


\end{titlepage}

\section{Introduction}

Personalized promotions have become increasing prevalent, with Lyft offering them in its app as early as 2019 \citep{shmoys2019how}, Alibaba Livestream Shopping offering highly personalized deals in real-time to its users watching influencers shop \citep{liu2023dynamic}, and Meituan explicitly discriminating in the Bike pass discounts offered to different users \citep{dai2024data}.  These practices are illustrated in \Cref{fig:intro_coupon}, and are generally allowed so long as they are not deceptive and do not violate anti-discrimination or competition rules.  Personalized promotions fall under the grander movement toward gamified shopping, e.g.\ on Temu \citep{zhou2023temu_game}, and AI-agent shopping assistants, e.g.\ on Alibaba Taobao and TMall \citep{alizila2024taobao_tmall_ai}, which provide highly personalized journeys for users on shopping platforms.

\begin{figure}
\centering
\includegraphics[width=\textwidth]{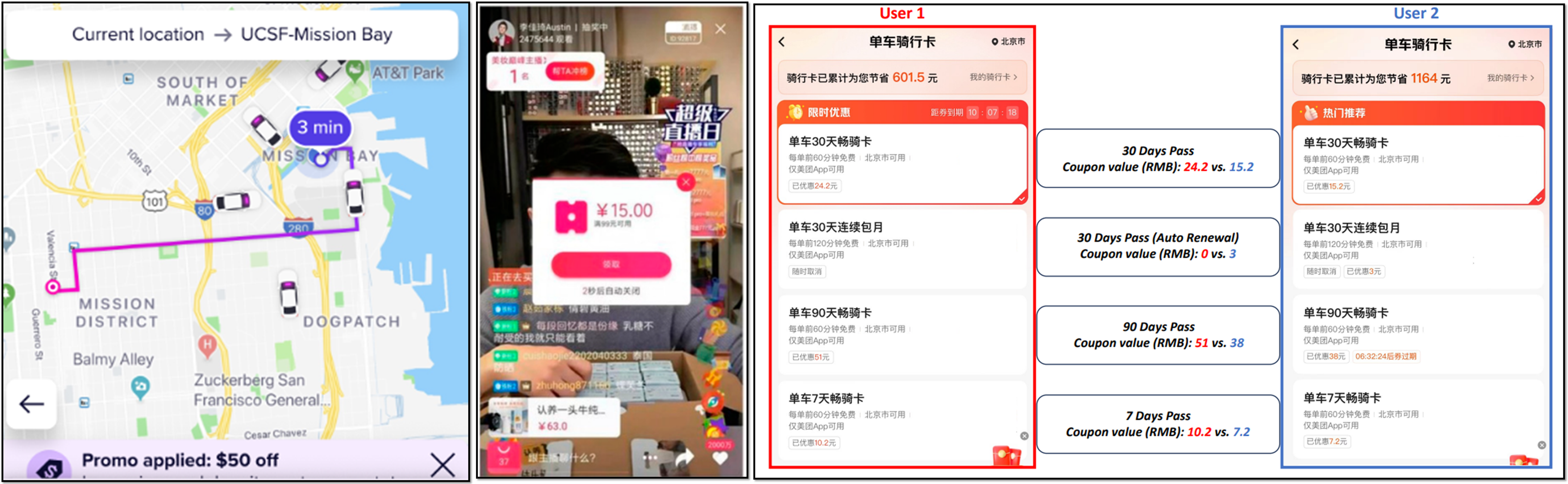}
\caption{Examples of personalized promotions on Lyft, Alibaba Livestream Shopping, and Meituan Bike.}
\label{fig:intro_coupon}
\end{figure}

Optimizing these personalized promotions from the business's end is a challenging problem, due to context-dependent customer behavior, complex interactions with other (non-personalized) promotions running at the same time, and longer-term reference effects where customers may become insensitive to big discounts if they expect even better ones to come.  Moreover, many promotion teams face budget constraints on the total discount redeemed by customers, which creates an allocation problem coupled across customers.  Finally, the algorithm must be fast and scalable, because these promotions are sent at a high-frequency (e.g., daily, or in real-time as the user browses the app), and promotions may need to be computed at the individual user level.

To explore this personalized promotion problem,
we partner with a large U.S.\ online retailer in the home decor sector, whose annual revenue is in the billions.
They send a daily marketing email to their userbase, each containing a personalized coupon offering "X\% off your entire order", where X is either 10\%, 12\%, 15\%, 17\%, or 20\% and targeted toward the specific user.

Our paper has two main parts.
The first, outlined in \Cref{sec:introMethod,sec:introDeploy}, describes our transformation of the coupon algorithm at our partner retailer to use data and optimization, where we develop a fast and scalable algorithm with one "shadow price" parameter that is manually set every day based on the promotional budget that can be allocated.
In an A/B test during May--June 2024 to 20 million customers, we saw a significant 4.5\% uplift in revenue, compared to their incumbent algorithm that sent different discounts based on ad-hoc clustering instead of granular personalization.
From this excellent result,
our algorithm became the default for allocating personalized promotions at our partner retailer in August--September 2024.

The second part of our paper, outlined in \Cref{sec:introTheory}, studies an abstract theoretical model with the goal of improving upon the deployed algorithm that myopically optimizes for next-day revenue.  Based on evidence from the data, we formulate a new model of pricing with reference effects, where the customer remembers the best promotion they saw over the past $\ell$ days as the "reference value".  We combinatorially characterize the structure of optimal policies for this model, which allows us to computationally solve for optimal promotion cycles.

\subsection{Practical Problem and Methodology (details in \Cref{sec:practicalMethodology})} \label{sec:introMethod}

\paragraph{Data and decisions.}
The userbase consists of customers $i$ with static features $\zCust_i$ (e.g., join date, shopping channel).
Every day $t$, the retailer offers each customer $i$ a personalized discount value $v_{it}\in\cV:=\{.10,.12,.15,.17,.20\}$. 
The customer then makes a purchase with pre-discount spend value $w_{it}\ge 0$ ($w_{it} = 0$ represents no purchase),
while also exhibiting auxiliary behavior captured by features $\zObs_{it}$ (e.g., website visits, shopping cart activity, email opens).
The objective is to maximize the long-run revenue
$$
\sum_{t,i} (1-v_{it})w_{it}
$$
while also facing a soft budget constraint on the total discounts redeemed $\sum_{t,i} v_{it}w_{it}$.

\paragraph{Solution approach.}
A customer $i$'s spend $w_{it}$ depends (randomly) on the day $t$, the personalized discount value $v_{it}$, as well as $\zCust_i$ and the entire history of $(v_{it'},w_{it'},\zObs_{it'})_{t'<t}$.
Because the exact spends $w_{it}$ are noisy and difficult to predict, we focus on the binary purchase indicator $y_{it}=\bI(w_{it}>0)$ and normalize spend to $w_{it} = 1$ conditional on purchase.
Maximizing long-run revenue under this assumption is still difficult, and not even formally-defined given the exogenous changes (e.g., the marketing team changing the e-commerce site) that could affect the purchase probabilities on future days.
Therefore, every day $t$ we myopically optimize the daily expected revenue
$$\sum_i(1-v_{it})\cdot\bE[y_{it}|v_{it}].$$ 

A potential concern is that a myopic policy could assign the same discount to a customer repeatedly, a pattern which our partner preferred to avoid.
However, we ensured that our model of $\bE[y_{it}|v_{it}]$ depends sufficiently on past decisions $(v_{it'})_{t'<t}$ to induce natural variation in the offered coupons over time.
Specifically, $\bE[y_{it}|v_{it}]$ becomes less sensitive to $v_{it}$ after a customer has received many strong discounts in recent periods, with customers becoming "complacent" to receiving good discounts (we provide empirical evidence of this in \textbf{\Cref{sec:initialObs}}).
In such cases, the myopic rule naturally shifts the customer toward smaller discounts, generating within-customer variation over time, which our partner viewed as a desirable property for deployment.
This desirable variation, driven by customers’ evolving sensitivity to discounts, inspires the reference effect model studied in the second part of this paper.

\paragraph{Training the model for $\bE[y_{it}|v_{it}]$.}
We extract features to construct a model for $\bE[y_{it}|v_{it}]$, accounting for the expert knowledge of our partner's machine learning team, while also ensuring there are features that depend on past decisions $(v_{it'})_{t'<t}$ to potentially induce the desired variation in discounts received.  To elaborate, we construct a mapping $\phi$ that extracts important context from the feature information and history for a customer $i$ on day $t$, resulting in the (processed) feature
$$
x_{it}:=\phi(t,\zCust_i,(v_{it'},w_{it'},\zObs_{it'})_{t'<t}).
$$
We then solve a supervised learning problem where the goal is to predict $y_{it}$ based on $(x_{it},v_{it})$ over all historical $i,t$ pairs.
We impose structure on the prediction model to isolate how decision variable $v_{it}$ affects the probability that $y_{it}=1$, and find that the direction is intuitively correct (i.e., better discount $v_{it}$ increases purchase probability) in 99.5\% of historical $i,t$ pairs (see \textbf{\Cref{sec:initialObs}}).
This approach lets us pool data across all customers, rather than fitting separate models for each individual, while the processed feature representation enables a flexible mapping from $x$ to purchasing behavior. This approach is inspired by a similar methodology used to personalize healthcare interventions in \citet{baek2025policy}.

\paragraph{Optimization method.}
After training our model, we let $q(x,v)$ denote the probability that it estimates for a customer with context $x$ making a purchase if they are offered discount value $v$.  On day $t$, we then send to each customer $i$
\begin{align} \label{eqn:personalizedObj}
v_{it}\in\argmax_{v\in\cV}(1-\lambda_t v)q(x_{it},v),
\end{align}
where $x_{it}$ is their current context, and $\lambda_t>0$ is a penalty parameter for giving too much discount.
The default value for $\lambda_t$ is 1, in which case~\eqref{eqn:personalizedObj} corresponds to maximizing expected revenue, but $\lambda_t$ can be decreased to give discounts more aggressively, or decreased to conserve the promotional budget.
In practice at our partner retailer, $\lambda_t$ is manually set each day $t$ by an internal employee, who is in touch with upper management on the budget to be allocated across customers.

We highlight that our method is fast and scalable: the prediction function $q(x,v)$ once trained is fast to call, and the optimization problem~\eqref{eqn:personalizedObj} is separately solved for each customer.  It is also easily tunable: $\lambda_t$ can be changed each day to adapt to changing business conditions.

\subsection{Deployment and Impact (details in \Cref{sec:deployResults})} \label{sec:introDeploy}

\paragraph{Collaboration details.}
We were given older data to use academically to develop the algorithm.
After testing on this data, we helped our partner write production-level code both for training a model from their most-recent data and for optimizing daily based on the trained model.
The model is re-trained periodically, although the most-recent data is not shared with us.
This code was deployed in an A/B test during May--June 2024, and aggregate, relative results were shared with us to report.
In addition, a small group of customers were reserved to receive independently random discount values every day, and this data was shared for the purpose of further academic testing.

\paragraph{Results from A/B test.}
In an A/B test on over 20 million customers that were randomly split 50/50 into treatment/control, our algorithm saw a massive 4.5\% increase in average revenue per user.  The $p$-value of such an observation if our algorithm was not better is $<0.01$.
We delve into where the improvement is coming from, and find that our algorithm is able to correctly predict sensitivity to discounts based on a customer's intertemporal state, and thus target the right customers---offer big discounts to customers who need incentive to make a purchase, and send stingy discounts to customers who were going to purchase anyway.
Our algorithm yields a more polarized coupon distribution, where it sends either small or large discounts to most customers, compared to the incumbent algorithm that offered more medium discounts.

In terms of who gets the best discounts, aside from the reference effect considerations, our algorithm tends to offer them to the engaged customers who open the most emails, as it believes that these customers are looking for bargains.  Thus, our algorithm drives the right incentives for customers to engage with the company's emails.  Our algorithm was rolled out to all customers and became the default algorithm for sending personalized promotions during August--September 2024.

\subsection{Theoretical Model and Results (details in \Cref{sec:theory})} \label{sec:introTheory}

Our deployed algorithm optimizes myopically each day, for simplicity, scalability, and to cope with uncontrollable exogenous factors such as arbitrarily changing budgets from day to day.
We now formally study the long-run optimization problem in a theoretical model with a single customer whose intertemporal context is one-dimensional, and no exogenous changes or budgets.
Our model is justified by our data, and 
its optimal solution provides an explanation for coupon cycling in practice as well as provides structure for how it should be optimized.

\paragraph{Reference value model.}
We study the problem of maximizing long-run average expected revenue from a single customer $i$:
\begin{align*}
\sup_{(v_{it})_{t=1}^\infty}\lim_{T\to\infty}\frac1T\sum_{t=1}^T (1-v_{it})q(x_{it},v_{it})    
\end{align*}
with a one-dimensional context $x_{it}:=\max\{v_{i,t-\ell},\ldots,v_{i,t-1}\}$.
This should be viewed as a "reference value", where the customer remembers the best promotion they received over the past $\ell$ days, for some fixed positive integer $\ell$.  It is assumed that this is the sole intertemporal state affecting the customer's purchase probabilities $q(\cdot,v)$ under different discounts $v$, allowing us to construct a deterministic Markov Decision Process (MDP) to model the customer.

We impose only one mild assumption on the function $q(\cdot,v)$, that it is \textit{reference-monotone}: fixing any offered discount $v\in\cV$, the purchase probability $q(x,v)$ is decreasing in $x$.  That is, the bigger the customer's reference value $x$, the more complacent they are to discounts and less likely they are to purchase under any offered $v$.

\paragraph{Model comparison and justification.}
In related literature on dynamic pricing with reference effects (reviewed in \Cref{sec:relatedWork}), the reference price is often defined as the (exponentially-weighted) average of past prices.  Our definition based on the extremum over a fixed memory length $\ell$ yields a more combinatorial model with only $|\cV|$ possible reference values, which will allow us to derive cleaner cycle structure results than most of this literature.

We also provide some empirical evidence for our definition in \textbf{\Cref{sec:empEvid}}, using the data in which customers were sent random discounts.  We find that the reference value $\max\{v_{i,t-\ell},\ldots,v_{i,t-1}\}$ is negatively correlated with purchases ($y_{it}=1$), providing empirical evidence of the reference effect.
We also find that our definition of the reference value is a better univariate predictor for purchases than using the average coupon value over the past $\ell$ days, $\frac{1}{\ell}(v_{i,t-\ell}+\cdots+v_{i,t-1})$.

\paragraph{Characterization of optimal policies.}
For our reference value model, the problem of maximizing long-run average can be formulated as an infinite-horizon undiscounted MDP with $|\cV|^\ell$ states, where the state needs to include the sequence of past $\ell$ values offered.  A priori, this exponential-sized MDP is computationally intractable to solve.
Fortunately, we are able to identify a key "$\ell$-up-1-down" structural result that both provides intuition about optimal cycle structure and allows us to solve this infinite-horizon undiscounted MDP problem in polynomial time.

To elaborate, an "$\ell$-up-1-down" policy is defined by a cycle of \textit{distinct} values in $\cV$, which we call its \textit{generator cycle}.
For each value in the generator cycle, we check if it results in a higher price than the previous value (i.e., the discount is worse)---if so (i.e., price is going "up"), then we repeat the value $\ell$ times; if not (i.e., price is going "down"), then we only offer the value once.
The final policy is to offer this cycle with repeats on the "up" values, ad infinitum.
For example, if $\cV=\{.10,.12,.15,.17,.20\}$ and $\ell=3$, then generator cycle $(.15\ .12\ .20)$ would lead to the policy cycling between values $(.15\ .15\ .15\ .12\ .12\ .12\ .20)$, offering the best discount of $.20$ once a week.  As another example, generator cycle $(.10\ .15\ .12\ .20)$ would lead to the policy cycling between values $(.10\ .10\ .10\ .15\ .12\ .12\ .12\ .20)$.
The intuition behind \luid policies is that if we were going to make the price go up, then the purpose is to "reset" the reference value of the customer, in which case we need to offer the higher price $\ell$ times.

We prove that under the reference-monotonicity assumption, there always exists an \luid policy that is optimal.
This drastically reduces the search space, noting that many cycles are not \luid---e.g.\ if $\ell=3$, then $(.10,.20)$ is not \luid because the .10 is not repeated 3 times;
$(.10,.10,.10,.20,.20,.20)$ is not \luid because the .20 should not be repeated;
and $(.10,.10,.10,.15,.15,.15,.20,.15)$ is not \luid because $(.10,.15,.20,.15)$ is not a valid generator cycle, as it contains the duplicate value $.15$.
To find an optimal policy, we only need to search over generator cycles with distinct values, which we show can be formulated as maximizing infinite-horizon undiscounted reward in a reduced MDP with only $|\cV|$ states, a problem solvable in polynomial time \citep{puterman2014markov}.

\paragraph{Proof technique.}
General theory about finite deterministic MDP's allows for a reduction to stationary deterministic policies defined by cycles, but this is not enough.
Leveraging reference-monotonicity, we further derive a sequence of transformations, none of which worsens the long-run average objective, that allows any policy to be eventually converted into an \luid cycle.
To the best of our understanding, this requires a non-trivial argument that also uses the specific combinatorial structure of our MDP with the "max over past $\ell$" reference value.  Along with the proof, we provide an example illustrating the variety of transformations that may be needed.

\paragraph{Tightness.}
We also prove that \luid is a tight characterization of optimal policies, in that for any generator cycle of distinct values in $\cV$, there exists a reference-monotone instance for which the \luid policy implied by that generator cycle is the unique optimal policy.



\subsection{Related Work} \label{sec:relatedWork}

\paragraph{Dynamic pricing with reference effects.}
Early works used numerical methods to optimize price sequences under reference effects, e.g.\ for peanut butter \citep{greenleaf1995impact}.
Since then, structural results have been established showing the non-necessity of dynamic pricing when customers are loss-averse or loss-neutral \citep{popescu2007dynamic,nasiry2011dynamic}.
Outside of these settings, papers that have tried to formally optimize price sequences found the problem to be generally difficult and exhibit little structure \citep{fibich2003explicit,hu2016dynamic,chen2017efficient}.

By contrast, our theoretical result establishes a clean structure for optimal price sequences, that moreover allows for efficient computation.
This stems from our model being combinatorial than most of these works, using a discrete but general demand function (instead of a specific functional form), a "peak-end" reference model based on the maximum/minimum in recent memory (instead of exponential smoothing), and a long-run steady state objective (i.e., infinite-horizon undiscounted instead of finite-horizon or infinite-horizon discounted).
We note that a combinatorial, graph-theoretic approach to dynamic pricing with reference effects was also taken in \citet{cohen2020efficient}, but they do not derive structural results for optimal policies like we do.

The "peak-end" reference model we study has also been considered in \citet{nasiry2011dynamic,cohen2020efficient,cohen2023high}, with \citet{cohen2023high} establishing optimality of a high-low pricing structure under certain conditions.  Nonetheless, their model is about pricing multiple products under business constraints and quite different in nature from our parsimonious theoretical model, and their structure is also different from our characterization of \luid price sequences.

Finally, our paper provides some data-driven evidence of this "peak-end" reference model in the context of personalized promotions (see \Cref{sec:empEvid}), complementing the aforementioned papers, which reference psychology evidence about people remembering peak experiences \citep{kahneman1993more}.
For earlier papers showing empirical evidence of the reference price effect in general, we refer to the survey by \citet{mazumdar2005reference}.
We should note however that our paper is motivated by \textit{personalized} reference effects, which may be starkly different from estimating reference effects for aggregate demand \citep{hu2018markets}.  \citet{jiang2024intertemporal} consider customer heterogeneity when estimating reference effects.

Recently, the literature on dynamic pricing with reference effects has also incorporated having multiple products with logit demand \citep{guo2025multiproduct}, and demand function learning \citep{den2022dynamic,agrawal2024dynamic}.  At an intuitive level, our \luid policies that offer the unattractive value $\ell$ times to "reset" the reference resembles the ResetRef operation in \citet{agrawal2024dynamic}.

\paragraph{Revenue management with repeated engagements.}
Our "$\ell$-up-1-down" characterization is more similar to the theoretical results derived in the literature on "intertemporal" price discrimination with forward-looking \citep{besbes2015intertemporal} or patient \citep{liu2015optimal} customers---see also \citet{wang2016intertemporal,lobel2020dynamic}.
However, our state is determined by \textit{past} instead of future prices, and our model is in some sense simpler because it is personalized for a single customer instead of dependent on population arrival rates.
This allows for a more exact characterization of optimal price cycles compared to most results in this literature (which only provide a bound on cycle length); we are also able to prove our "$\ell$-up-1-down" characterization is tight (see \Cref{thm:tightness}).

More broadly, our work relates to revenue management under models of repeated engagements with the customer base or an individual customer.
To this end,
\citet{calmon2021revenue} study a model where current prices affect customer goodwill, which in turn affects the future budget they spend on the platform.
\citet{freund2021fair} study a model where users can be given rewards at each round to stay on the platform, but these rewards must satisfy fairness constraints.
\citet{jiang2022revenue} study a model where if the current price is too high, customers can register their willingness-to-pay and be alerted if/when the price drops below their registered price.
\citet{chang2024pricing} study a model where users form usage habits, and pricing can be used to both maximize revenue and curb addiction.

\paragraph{Personalized promotions in practice.}
Our work is based on a real-world deployment of personalized promotions, which can be compared to the deployments in the following works.
Personalized promotion allocation under a promotional budget is formulated as a knapsack problem in \citet{shmoys2019how,albert2022commerce}, which describe real-world applications at Lyft and Booking.com respectively.
Our application is similar, but approach is different, as we propose a single "shadow price" parameter to (approximately) solve the allocation problem instead.  A similar method is in fact deployed to allocate personalized coupons at Meituan \citep{dai2024data}.
Finally, due to the challenges in cleanly modeling intertemporal customer state, \citet{liu2023dynamic} proposes using Deep Reinforcement Learning to black-box learn the best personalized promotion policies for the long-term, which was deployed in real-time at Alibaba Livestream Shopping.
However, we were forced to use a more structured and interpretable approach at our partner retailer, which allows management to see which features cause customers to receive better coupons (see \Cref{sec:initialObs}).

\section{Details of Practical Methodology} \label{sec:practicalMethodology}

We provide details of our practical methodology outlined in \Cref{sec:introMethod}.

We use historical data from 360,000 customers over 90 days.
The raw dataset consists of time-stamped customer-level activity logs that track marketing exposure and subsequent engagement and shopping behavior over time. In particular, the data includes whether and when each customer received marketing emails and coupons, and whether those emails were opened or clicked. The dataset also includes transaction history (purchase occurrence, purchase amounts, and other order-level details) and on-site browsing behavior, such as page views, cart views, and checkout-related activity.
We note that the coupon decisions in the historical dataset were made the by the ad-hoc incumbent algorithm, which does not have access to any features unobservable to us.  Therefore, we do not expect any confounding as long as we control for the observable features.

In collaboration with our partner, we processed the raw data to generate a list of features to use for our estimation model, to capture the majority of the relevant information that may influence a customer's purchasing behavior.
We include information about customers’ 
purchasing and browsing activity (recent purchase volume and transaction counts, site and cart views, and cart composition and discounts), their engagement with emails (opens and clicks over multiple recency windows), their coupon exposure and interaction history (coupon values received, and summary statistics of coupons received, opened, and clicked), as well as a day-of-week indicator. 
Combined, these processed variables yield a total of 61 features, listed in \Cref{sec:practical_method}. 

\subsection{Estimation Model and Method}

Let $\mathcal D$ be the dataset of tuples $(x, v, y)$ of processed features, coupon values, and purchase indicator respectively.
The size of dataset $\mathcal D$ is approximately $360,000\times 90$ (one for each customer, day pair).
We use $\mathcal D$ to estimate $q(x,v)$, the probability that a customer with processed features $x$ makes a purchase when offered coupon value $v$.

One approach to learn $q(x,v)$ would be to fit a flexible black-box machine learning model that maps $(x, v)$ to $y$. 
However, since our downstream goal is to optimize the choice of $v$ rather than to maximize predictive accuracy for $y$, 
we need a reliable estimate of how \textit{changes} in $v$ impact $y$.
A black-box approach can yield unstable or even non-monotone responses to changes in $v$.
Therefore, we impose a structure on $q(x,v)$ that makes the dependence of $v$ explicit, where we decompose the purchase probability $q(x,v)$ into a baseline component $\alpha(x)$ and a coupon-sensitivity component $\beta(x)$. 
Specifically, letting $\sigma(y):=1/(1+\exp(-y))$ denote the logistic function, we impose the structure
\begin{equation}\label{eq:structure}
q(x,v)=\sigma\!\big(\alpha(x) + (v-0.15)\beta(x)\big),
\end{equation}
for learned functions $\alpha(\cdot)$ and $\beta(\cdot)$.
Here, $\alpha(x)$ captures baseline purchase propensity at the "nominal" coupon value of $0.15$, and $\beta(x)$ captures coupon sensitivity: $\beta(x)=0$ implies no effect of the coupon value $v$ on purchase probability, while larger $\beta(x)$ implies stronger responsiveness to changes in $v$.

We note that the structure \eqref{eq:structure} imposes that conditional on $x$, the coupon value $v$ affects purchase probability \textit{only} through the linear term $(v-0.15)\beta(x)$, so the marginal effect of increasing $v$ is governed by a single term $\beta(x)$ that does not vary with $v$. While coupon sensitivity could depend on $v$ in practice, this restriction serves as regularization that enforces a monotone response in $v$ and yields a more stable and interpretable estimate of coupon sensitivity.

We estimate $\alpha(\cdot)$ and $\beta(\cdot)$ using $\mathcal D$ via a two-step procedure: we first learn $\alpha(x)$ flexibly, then learn $\beta(x)$ under a linear structure.
That is, using the subset of $\mathcal D$ consisting only of observations with a coupon value of $v=0.15$, we first fit a gradient-boosting classifier to predict $y$ from $x$. 
Denoting this classifier by $\mathrm{GB}(x)$ as an estimate of $q(x, 0.15)$, and using the fact that $q(x, 0.15) = \sigma(\alpha(x))$, we set $\alpha(x) = \sigma^{-1}(\mathrm{GB}(x))$.

Then, we use the learned $\alpha(x)$ to estimate $\beta(x)$. We impose a linear structure $\beta(x)=\beta^\top x$ (where $\beta,x\in\bR^{61}$), which allows us to write   
\begin{align*} 
    q(x, v) =\sigma\big(\alpha(x) + \beta^{\top}  (v - 0.15) x \big) = \sigma\left( 
    \begin{pmatrix} 
    1 \\ \beta
    \end{pmatrix} ^{\top}
    \begin{pmatrix} 
    \alpha(x) \\
    (v - 0.15)x  
    \end{pmatrix}  
    \right).
\end{align*}
That is, the term inside the logistic function has a linear dependence on the features $\alpha(x)$ and $(v-0.15)x$.
For each sample $(x, v, y) \in \mathcal{D}$, we compute the corresponding features $\alpha(x) = \sigma^{-1}(\mathrm{GB}(x))$ and $(v-0.15)x$, and then we estimate $\beta$ using logistic regression on these transformed features.

\subsection{Empirical Observations of Estimated Model} \label{sec:initialObs}

We fit the model~\eqref{eq:structure} to historical customer-day observations and summarize two empirical patterns.

First,
across historical customer-day pairs, we find that the estimated coupon-sensitivity term $\beta(x)$ is positive for 99.5\% of samples. This is consistent with the basic monotonicity expectation that a larger discount should weakly increase purchase probability.

Next, to understand which features are most strongly associated with $\beta(x)$, we standardize the 61 processed features and fit an auxiliary L1-regularized logistic regression, tuning the penalty to obtain a sparse set of predictors. \Cref{tab:beta_coefs} reports the ten most predictive features and the signs of their coefficients, where a positive sign indicates that larger values of the feature are associated with higher coupon sensitivity.

\begin{table}
\begin{center}
\begin{tabular}{@{}lc@{}}
\toprule
\; Feature & Sign of Coefficient  \\ \midrule
\; (a) \# emails clicked in the last 28 days & $+$ \\ 
\; (b) \# shopping cart visits in the last 3 days   & $+$ \\
\; (c) \# shopping cart visits in the last 7 days   & $+$ \\
\; (d) Average site sale discount in the cart & $+$ \\
\; (e) Average percentage of all historical orders where a coupon was used & $+$ \\
\; (f) Average coupon use percentage of all orders in the last 30 days & $+$ \\
\; (g) Average coupon use percentage of all historical orders & $+$ \\
\; (h) Average coupon discount of a coupon clicked in the last 7 days   & $+$ \\
\; (i) Average coupon discount of a coupon clicked in the last 30 days   & $+$ \\
\; (j) Maximum coupon discount received in the last 7 days & $-$ \\
\bottomrule
\end{tabular}
\end{center}
\caption{
The ten most predictive features for $\beta$.
A positive coefficient implies that a high value for the feature is associated with having a higher sensitivity to the coupon value. 
} \label{tab:beta_coefs}
\end{table}

Features (a)–(c) capture recent engagement, through either email interactions or active cart activity.
The positive coefficients imply that customers who were more recently engaged with emails or their shopping cart are more likely to respond to a better discount. 
Features (d)–(i) capture past reliance on discounts, including frequent coupon use and recent interaction with higher-value coupons. Customers with a high value of these features are ones constantly looking for good discounts, and hence they are not likely to make a purchase with a low-value coupon (i.e., they are price-sensitive customers). Our model associates these features with higher coupon sensitivity.

Finally, feature (j) has a negative sign: customers who recently received a high-value coupon tend to be less responsive to additional increases in $v$. This represents the reference price effect, where receiving a large discount in the recent past decreases the customer's reference price that they need to pay, which makes a subsequent discount less effective.
This mechanism also mitigates repetition in the assigned coupons. If a customer recently received a good discount, their coupon sensitivity decreases, and hence the myopic policy naturally shifts the customer toward worse offers, generating within-customer variation over time.

We provide further evidence of this reference effect in \Cref{sec:empEvid}.
The fact that the \emph{maximum-value} recent coupon is the key predictor of diminished responsiveness motivates our theoretical model in  \Cref{sec:theory}, where the reference value is defined as the maximum discount in recent memory.

\subsection{Optimization Details}

Recall from~\eqref{eqn:personalizedObj} that every day $t$, our algorithm sends to each customer $i$ the discount value $v_{it}\in\cV$ maximizing $(1-\lambda_t v_{it})q(x_{it},v_{it})$, where $\lambda_t$ is a parameter.  We now elaborate on the tuning of $\lambda_t$.  Let $W$ denote the average pre-discount spend in a single shopping cart checkout, over all historical orders.
For any value of $\lambda_t$, we can compute the implied decisions $(v_{it})_i$, under which the expected discount redeemed is
\begin{align} \label{eqn:discRedeemed}
\sum_i v_{it} W q(x_{it},v_{it}).
\end{align}
The value of~\eqref{eqn:discRedeemed} is compared to a promotional budget $B_t$.  If it does not exceed $B_t$, then the discount values $(v_{it})_i$ are sent to the customers.  Otherwise, $\lambda_t$ is increased until~\eqref{eqn:discRedeemed} does not exceed $B_t$.

\begin{proposition}[proven in \Cref{pf:simple}] \label{prop:simple}
Suppose $q(x_{it},v)$ is weakly increasing in $v$ for all $i$.  Then~\eqref{eqn:discRedeemed} is weakly decreasing in parameter $\lambda_t$.
\end{proposition}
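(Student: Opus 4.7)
The plan is to establish monotonicity customer by customer and then sum. Fix any customer $i$ and any two parameter values $\lambda' > \lambda$. Let $v \in \argmax_{u \in \cV}(1-\lambda u)q(x_{it},u)$ and $v' \in \argmax_{u \in \cV}(1-\lambda' u)q(x_{it},u)$ denote arbitrary selections from the two argmaxes. The goal will be to show that
$$
v\cdot q(x_{it},v) \;\geq\; v'\cdot q(x_{it},v'),
$$
i.e.\ that each customer's individual contribution to~\eqref{eqn:discRedeemed} is weakly smaller at $\lambda'$ than at $\lambda$. Summing over $i$ and multiplying by the positive constant $W$ then gives the claim.

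The main step is a standard revealed-preference argument. From the optimality of $v$ at $\lambda$, I have $(1-\lambda v)q(x_{it},v)\geq (1-\lambda v')q(x_{it},v')$; from the optimality of $v'$ at $\lambda'$, I have $(1-\lambda' v')q(x_{it},v')\geq (1-\lambda' v)q(x_{it},v)$. Adding these two inequalities makes the $q(x_{it},\cdot)$ terms without a $\lambda$ factor cancel, leaving
$$
(\lambda'-\lambda)\bigl[v\cdot q(x_{it},v) - v'\cdot q(x_{it},v')\bigr]\;\geq\;0,
$$
and since $\lambda'>\lambda$ this is exactly the per-customer inequality above.

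I do not anticipate a real obstacle; the only thing to be careful about is that the argmax need not be unique, but the two optimality inequalities hold for any selection, so the conclusion is robust to how ties are broken. One may note that the stated monotonicity of $q(x_{it},v)$ in $v$ is not actually invoked in the revealed-preference step---it is useful mainly as economic intuition, ensuring that $v\cdot q(x_{it},v)$ is itself weakly increasing in $v$ and hence that larger $\lambda_t$ also pushes each selected $v_{it}$ (not just the product $v_{it}\,q(x_{it},v_{it})$) to weakly smaller values, as one expects from Topkis-type comparative statics applied to $(1-\lambda v)q(x_{it},v)$, which has decreasing differences in $(v,\lambda)$ under the monotonicity hypothesis.
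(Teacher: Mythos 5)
Your proof is correct, and it takes a genuinely different route from the paper's. The paper first proves the comparative static on the \emph{decision}: it shows via a ratio comparison (the rearrangement inequality applied to $\frac{1-\lambda v'}{1-\lambda v}\ge\frac{1-\lambda' v'}{1-\lambda' v}$, essentially a single-crossing/Topkis argument) that the selected discount $v_{it}$ can only weakly decrease as $\lambda_t$ increases, and only then invokes the hypothesis that $q(x_{it},v)$ is weakly increasing in $v$ to conclude that $v_{it}\,q(x_{it},v_{it})$, and hence~\eqref{eqn:discRedeemed}, decreases. You instead add the two revealed-preference inequalities so that the $\lambda$-free terms cancel, obtaining $(\lambda'-\lambda)\bigl[v\,q(x_{it},v)-v'\,q(x_{it},v')\bigr]\ge0$ directly; this proves monotonicity of the per-customer redeemed amount without ever using the monotonicity of $q$ in $v$, so your argument establishes a strictly more general statement (the hypothesis is superfluous for the conclusion as stated), avoids dividing by $1-\lambda v$ (whose positivity the paper implicitly assumes), and handles ties cleanly since the two optimality inequalities hold for arbitrary selections from the argmax. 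The trade-off is that your core step does not by itself yield the paper's intermediate conclusion that the offered discount $v_{it}$ is monotone in $\lambda_t$ (a fact of independent practical interest for the bisection-search tuning); your closing remark correctly notes that this follows from decreasing differences of $(1-\lambda v)q(x_{it},v)$ in $(v,\lambda)$ once the monotonicity hypothesis is reinstated, which is exactly what the paper's first step establishes.
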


Recall from \Cref{sec:initialObs} that the 
predicted purchase probability $q(x_{it},v)$ is increasing in $v$ for 99.5\% of contexts $x_{it}$, due to the non-negative sign of $\beta(x_{it})$.
Therefore, we can essentially think of~\eqref{eqn:discRedeemed} as being decreasing in $\lambda_t$ in practice.

At our partner retailer, an internal employee first evaluates~\eqref{eqn:discRedeemed} under $\lambda_t=1$ (which would maximize revenue), comparing it to the promotional budget $B_t$ for the day $t$.  If~\eqref{eqn:discRedeemed} exceeds $B_t$, then they use bisection search to increase $\lambda_t$ to the smallest value for which~\eqref{eqn:discRedeemed} is no greater than $B_t$.

We note that we tried to get a better prediction of the total discount redeemed by having $W$ depend on the customer $i$ or the discount value $v_{it}$.  
However, this failed because exact spend amounts are highly idiosyncratic, which is why we focus on predicting purchase probabilities instead.

\section{Details of Deployment and Impact} \label{sec:deployResults}

An A/B test was ran during May--June 2024, in which 20 million customers were randomly split 50/50 into treatment/control.
Customers in the treatment group received personalized discount values determined by our algorithm, while
customers in the control group received discount values determined by the incumbent algorithm that had been in production for a couple of years.
Once a customer was assigned to either the treatment or control groups, their assignment remained unchanged throughout the experiment.

\paragraph{Overview of incumbent algorithm.}
The incumbent algorithm focused on what is the correct \textit{distribution} of coupons, i.e.\ what fraction of customers should receive each of the discounts 10\%, 12\%, 15\%, 17\%, and 20\%, which could depend on the promotional budget.
Meanwhile, the customers were clustered in an ad-hoc fashion, based mostly on tenure and average spend.
Customers in the same cluster would generally be sent the same discount value, with arbitrary adjustments as needed to fit the desired coupon distribution.
Additional heuristics were inserted to ensure that each customer saw variation in the discounts they received over time.

Overall, the incumbent algorithm processed data manually, to create coarse clusters of customers, based on static features.  By contrast, our algorithm used machine learning on the data, to create personalized prediction models for each customer, that also accounted for their intertemporal state including reference effects.  To fairly compare our algorithm to the incumbent, the parameter $\lambda_t$ in our algorithm was tuned daily to match the promotional budget of the incumbent algorithm.

\paragraph{Results.}
Aggregate, relative results over an 11-day period from the A/B test were shared with us, and displayed in \Cref{Figure:revenue improvement}.  During this 11-day period, our algorithm's prediction model was not re-trained.  Our algorithm optimizes for revenue, and we indeed saw a 4.5\% higher total revenue over the 11-day period in the treatment group compared to the control group, which have the same size.
We also estimate the average treatment effect by regressing each customer's 11-day total revenue on a treatment indicator and conducting a one-sided test using heteroskedasticity-robust standard errors, and observe a $p$-value less than 0.01 for the null that our algorithm does not increase average revenue.

\begin{figure}
\centering
\includegraphics[width=\textwidth]{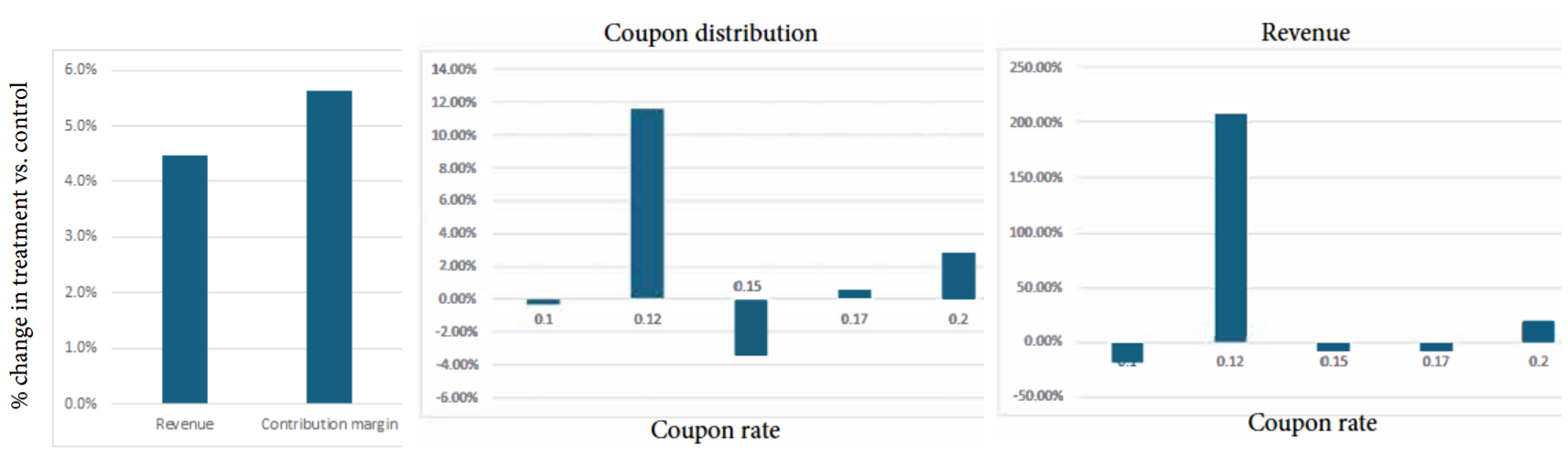}
\caption{Deployment results:
Left chart is overall \% change in revenue and profit between treatment vs.\ control;
Middle chart is \% change in total number of times each coupon rate was sent;
Right chart is \% change in total revenue from coupons of each rate being redeemed.
}
\label{Figure:revenue improvement}
\end{figure}

Our algorithm does not account for cost of goods sold (i.e., the wholesale price our partner retailer paid for its inventory), but we see an even bigger 5.6\% increase in contribution margin (i.e., profit).  This suggests that our coupon targeting could be enticing the customers to substitute to higher-end goods which have higher margins, although we did not formally test this.

Regardless, the stark results on the Left chart of \Cref{Figure:revenue improvement} fully demonstrate the power of machine learning, granular personalization, and intertemporal state---three new elements that our algorithm introduced to our partner's business.
We also emphasize that these numbers pertain to overall revenue and contribution margin, without any subdivision of the goods or customers, which was very attractive to our partner's management.
We completely changed how they think about allocating coupons, where instead of directly optimizing the distribution of coupon values sent, one should learn individual customer models and personalize coupon values, with the distribution of coupons being a by-product of the individual-level optimization.

\paragraph{Where is the money coming from?}
The Middle chart of \Cref{Figure:revenue improvement} shows how our algorithm changed the distribution of coupons.  In particular, there is an 11.56\% increase in the number of 0.12 coupons sent, and a 2.91\% increase in the number of .2 coupons sent, mostly at the expense of a 3.44\% decrease in the number of .15 coupons sent.
Overall, our algorithm seems to think that the 0.12 coupon rate strikes a nice balance, where it is not as unattractive to the customer as the 0.1 rate, but preserves more revenue for the retailer than the 0.15 rate.

The Right chart of \Cref{Figure:revenue improvement} shows massive changes in the total revenue from coupons of each rate being redeemed, suggesting that our algorithm is significantly shifting which customers receive each coupon rate, and overall enticing more purchases.  In particular, there is a 208\% higher (i.e., tripled) revenue from 0.12 coupons being redeemed, with only an 11.56\% increase in the number of 0.12 coupons sent, suggesting that our algorithm is correctly targeting this smaller discount toward customers who were going to make a purchase regardless of the coupon rate.
At the same time, there is a 20.5\% higher revenue from 0.2 coupons being redeemed, with only a 2.91\% increase in the number of 0.2 coupons sent, suggesting that our algorithm is also correctly targeting the best discount at customers who need the bargain to make a purchase.

\paragraph{Which customers receive the best coupons?}
Generally, the structure of our estimation and optimization is such that every day $t$, customers $i$ in an intertemporal state $x_{it}$ with a higher value of $\beta^\top x_{it}$ tend to receive a bigger discount $v_{it}$.  Looking back at \Cref{tab:beta_coefs}, this suggests that engaged customers who have redeemed more coupons in the past would receive better coupons in the future.  However, this may not happen if they have already received the biggest coupon in the past 7 days, inducing some natural cycling in coupons received even for these customers.

\subsection{Empirical Evidence for Theoretical Model} \label{sec:empEvid}

We provide empirical support for the key ingredients of our theoretical model (\Cref{sec:theory}), including reference effects and the reference-monotonicity assumption.
We leverage data from a group of 150,000 customers who, for an 11-day period during the A/B test, were assigned an independent random coupon each day, drawn uniformly from $\{0.12,0.15,0.17,0.20\}$. The randomized assignment allows us to directly measure correlations without controlling for any factors.

\paragraph{Evidence of reference effect.}
For various memory lengths $\ell$, we define the reference value for a customer $i$ on day $t$ as $\max\{v_{i,t-\ell},\ldots,v_{i,t-1}\}$, and compute its empirical correlation with the purchase indicator $y_{it}$ across customer-day observations from the randomized dataset. \Cref{fig:ref_corr} shows that this correlation is generally negative across $\ell$, suggesting that a bigger recent-best discount (a bigger reference value) is associated with a lower propensity to purchase, consistent with a reference effect under our definition.

As a comparison, we also consider an average-based reference metric, $\frac{1}{\ell}(v_{i,t-\ell}+\cdots+v_{i,t-1})$.
The corresponding correlations are not as strong or negative, which suggests that the maximum over recent history is a more predictive univariate proxy for intertemporal effects than the average.

\begin{figure}
\centering
\includegraphics[width=0.5\textwidth]{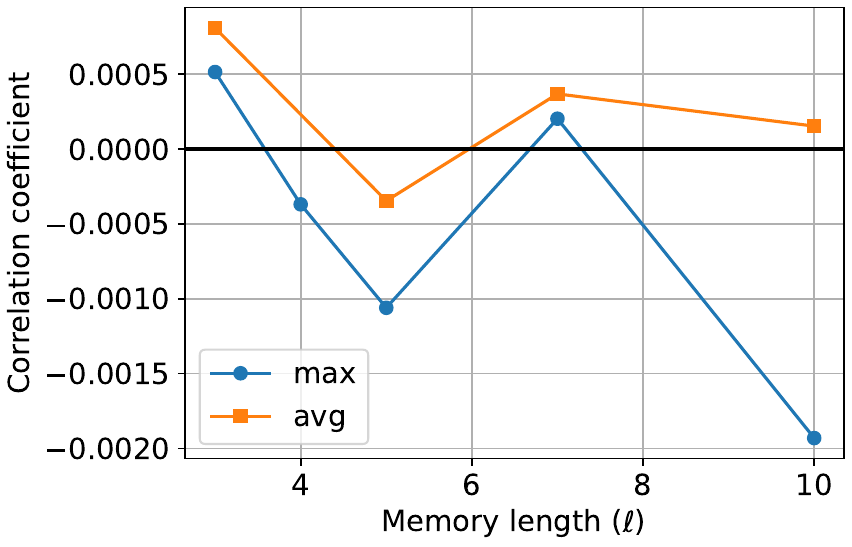}
\caption{Correlation between reference coupon metrics and the purchase indicator $y_{it}$ for various memory lengths $\ell$. The reference metric ``max'' uses $\max\{v_{i,t-\ell},\ldots,v_{i,t-1}\}$, and ``avg'' uses $\frac{1}{\ell}(v_{i,t-\ell}+\cdots+v_{i,t-1})$.
We note that the absolute magnitude of the correlation coefficient is generally small because over 99.9\% of $y_{it}$ values are 0.
}
\label{fig:ref_corr}
\end{figure}

\paragraph{Evidence for reference-monotonicity.}
Next, we test the reference-monotonicity assumption that, under any discount $v_{it}$ offered today, purchase propensity is higher when the reference discount is smaller.
To do this, we group coupon values into \emph{small} discounts $(0.12,0.15)$ vs.\ \emph{large} discounts $(0.17,0.20)$.
For each case of $v_{it}$ being small or large, we consider different memory lengths $\ell\in\{3,4,5,7\}$, and report the percent change in purchase rate when the reference $\max\{v_{i,t-\ell},\ldots,v_{i,t-1}\}$ lies in the small set $(0.12,0.15)$ instead of the large set $(0.17,0.20)$. \Cref{tab:reference-monotonicity} shows that the change is positive in nearly all cases, providing support for the reference-monotonicity assumption.

\begin{table}[]
\centering
\begin{tabular}{c|cc}
\toprule
 $\ell$ &  $v_{it}\in\{0.12,0.15\}$ &  $v_{it}\in\{0.17,0.20\}$ \\
\midrule
 3 &        +7.1\% &        -1.1\%  \\
 4 &        +30.1\%  &         +16.5\%  \\
 5 &        +70.3\%  &         +55.9\%  \\
 7 &       +137.1\%  &        +12.6\%  \\
\bottomrule
\end{tabular}
\caption{Percent changes in purchase rate when the reference value $\max\{v_{i,t-\ell},\ldots,v_{i,t-1}\}$ lies in $\{0.12,0.15\}$ instead of $\{0.17,0.20\}$, reported separately for different memory lengths $\ell$ and different values of $v_{it}$.}
\label{tab:reference-monotonicity}
\end{table}

\section{Theoretical Model and Results} \label{sec:theory}

We provide the theoretical model and results outlined in \Cref{sec:introTheory}.
To be consistent with the literature on dynamic pricing with reference effects, just for this \namecref{sec:theory}, we switch the language from offering a sequence of \textit{discounts} (where bigger is better for the customer) to offering a sequence of \textit{prices} (where lower is better for the customer).  Therefore, the reference value is defined to be the \textit{minimum (instead of maximum)} of recent offerings.
We omit the index $i$ of the single customer.

\subsection{Model and MDP Formulation} \label{sec:model}

\paragraph{Model.}
There is a finite set of feasible prices $\cP$
from which an infinite sequence of prices $(p_t)_{t=1}^\infty$ is offered to the customer ($\cP=\{.80,.83,.85,.88,.90\}$ would correspond to the discount values $\cV=\{.10,.12,.15,.17,.20\}$ from our partner retailer).  At each time $t$, the \textit{reference} $r_t$ of the customer is $\min\{p_{t-\ell},\ldots,p_{t-1}\},$ the best price seen in the $\ell$ previous time steps (if $t\le\ell$, then $r_t$ is understood to be $\min\{p_1,\ldots,p_{t-1}$\}, with $r_1=\op$ where $\op:=\max\{p:p\in\cP\}$ denotes the highest price in $\cP$).
Here $\ell$ is a positive integer denoting the \textit{memory length} of the customer.

We note that both the price $p_t$ and the reference $r_t$ always lie in $\cP$.
We let $g(r,p)$ denote the immediate "gain", e.g.\ revenue, from offering price $p\in\cP$ to the customer when their reference is $r\in\cP$.
We allow for an arbitrary gain function $g$ as long as it satisfies the following assumption.

\begin{definition} \label{def:refMon}
A gain function $g:\cP^2\to\bR$ is said to be \textit{reference-monotone} if for any fixed $p\in\cP$, the gain $g(r,p)$ is weakly increasing in $r$.
\end{definition}

\Cref{def:refMon} is a mild assumption that is almost axiomatic in reference price models, where regardless of what price $p$ the seller plans on offering, they are better off if the reference price $r$ they are competing against is higher (and hence worse in the customer's mind).
Importantly, we do not make any assumptions on how $g(r,p)$ changes with $p$, which allows for different objectives such as market share (in which case $g(r,p)$ is generally decreasing in $p$) or revenue (in which case $g(r,p)$ is often increasing and then decreasing in $p$).

\paragraph{MDP formulation.}
The problem of maximizing long-run average gain
\begin{align} \label{eqn:lragObj}
\lim_{T\to\infty}\frac1T\sum_{t=1}^T g(r_t,p_t)
\end{align}
can be formulated as the following MDP.
The state space is $\cP^\ell$, with a state $s=(s^1,\ldots,s^\ell)$ denoting the $\ell$ previous prices seen by the customer, where $s^\ell$ is most recent.
(Although the reference only depends on the minimum of the $\ell$ previous prices, the evolution of this minimum depends on the exact sequence of $\ell$ previous prices, which must all be tracked in the state.)
The action space from any state is $\cP$, denoting the next price to offer.
When any action $p\in\cP$ is taken from any state $s\in\cP^\ell$, the reward is $g(\min\{s^1,\ldots,s^\ell\},p)$, and the state deterministically transitions to $(s^2,\ldots,s^{\ell},p)$.
The starting state is $s_1=(\op,\ldots,\op)$.

Standard theory for finite MDP's \citep{puterman2014markov} confirms that~\eqref{eqn:lragObj} can be maximized using a policy which is stationary and deterministic, defined by a mapping $\pi:\cP^\ell\to\cP$ specifying the price to offer from any state.
However, optimizing over such policies is still difficult, because our MDP is exponential-sized.  Therefore, we consider the following reformulation instead.

\paragraph{Reduction to optimization over price cycles.}
Because the transitions are deterministic in our MDP, a stationary deterministic policy $\pi$ induces a periodic steady state, where for some cycle length $c\le|\cP|^\ell$, there exists a time $T_0$ after which the states visited will cycle between $s(0),\ldots,s(c-1)$.
The prices offered would cycle between $\pi(s(0)),\ldots,\pi(s(c-1))$, which we denote using $\pi_0,\ldots,\pi_{c-1}$ respectively.
Given $\pi_0,\ldots,\pi_{c-1}$, the cycle of states can be reconstructed to be $s(t)=(\pi_{t-\ell\mod c},\ldots,\pi_{t-1\mod c})$ for all $t=0,\ldots,c-1$ (noting that $c<\ell$ is possible), and hence the objective~\eqref{eqn:lragObj} (irrespective of starting state) equals
\begin{align} \label{eqn:cycleObj}
\frac 1c \sum_{t=0}^{c-1} g\Big(\min\{\pi_{t-\ell\mod c},\ldots,\pi_{t-1\mod c}\},\pi_t\Big).
\end{align}

\subsection{Characterization of Optimal Policies}

Maximizing objective~\eqref{eqn:cycleObj} over positive integers $c$ and prices cycles $(\pi_0,\ldots,\pi_{c-1})\in\cP^c$ is still computationally challenging, because a priori, $c$ could be as long as $|\cP|^\ell$, with the same price appearing multiple times in different contexts.  For an arbitrary gain function $g:\cP^2\to\bR$, this could indeed be the case (see \Cref{ex:withoutRefMon} below).  
However, under the reference-monotonicity assumption, our main result drastically reduces the search space, to "$\ell$-up-1-down" price cycles.

\begin{definition}[Notation]
We hereafter represent prices cycles using strings over the finite alphabet $\cP$, and let $p^k$ denote the concatenation of $k$ copies of  token $p$, for any positive integer $k$ and $p\in\cP$.
For example, if $\cP=\{1,2,3\}$, then $12^3$ denotes the string 1222, which is the price cycle with $c=4$ and $(\pi_0,\ldots,\pi_3)=(1,2,2,2)$.
We note that this cycle can be equivalently represented using the strings 2221, 2212, or 12221222.
\end{definition}

\begin{definition} \label{def:luid}
A price cycle is said to be \textit{\luid} if it can be represented by a string of the form $\rho_0^{k_0}\cdots\rho_{d-1}^{k_{d-1}}$, where $d\le|\cP|$ is a positive integer, $\rho_0,\ldots,\rho_{d-1}$ are \textit{distinct} prices in $\cP$, and
\begin{align*}
k_t &=
\begin{cases}
\ell, &\rho_t>\rho_{t-1\mod d} \\
1, &\rho_t<\rho_{t-1\mod d}
\end{cases}
&\forall t=0,\ldots,d-1.
\end{align*}
We call $(\rho_0,\ldots,\rho_{d-1})$ the \textit{generator cycle}, with length $d$.
The price cycle itself has length $c=\sum_{t=0}^{d-1}(1+(\ell-1)\bI(\rho_t>\rho_{t-1\mod d}))$.
\end{definition}

\begin{example}[Examples of \luid Price Cycles]
Let $\cP=\{1,2,3\}$ and $\ell=3$.  Then 1222, 1222333, and 13332 represent \luid price cycles, induced by generator cycles $(\rho_0,\rho_1)=(1,2)$, $(\rho_0,\rho_1,\rho_2)=(1,2,3)$, and $(\rho_0,\rho_1,\rho_2)=(1,3,2)$ respectively.
Meanwhile, 12 does not represent an \luid price cycle because the "2" does not have $\ell$ copies despite being greater than 1.
Also, 111222 does not represent an \luid cycle because the "1" has multiple copies despite being less than 2.
Finally, 12223332 has the correct number of copies but does not represent an \luid cycle because copies of "2" occur on two different occasions in the cycle.
\end{example}

\begin{theorem} \label{thm:theoryMain}
For any finite $\cP$, memory length $\ell$, and reference-monotone gain function $g$, there exists an \luid price cycle that maximizes~\eqref{eqn:cycleObj}.
\end{theorem}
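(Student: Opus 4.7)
The reduction to maximizing \eqref{eqn:cycleObj} over price cycles is already in the paper, so the task is to show that every price cycle $(\pi_0,\ldots,\pi_{c-1})$ is dominated, in long-run average gain, by some \luid cycle. My plan is to exhibit a sequence of cycle transformations, each preserving or improving the average, that converts an arbitrary cycle into an \luid one.

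I would start from the run-length decomposition $\rho_0^{k_0}\cdots\rho_{m-1}^{k_{m-1}}$ with consecutive $\rho_t$'s distinct, and classify its deviations from \Cref{def:luid} into two types: (i) run lengths $k_t$ that fail the up/down rule (i.e., $k_t\neq\ell$ on an up-move or $k_t\neq 1$ on a down-move), and (ii) repeated prices at non-consecutive positions in the generator visit sequence. Defects of type (i) would be removed by run-length corrections (truncating or extending $\rho_t^{k_t}$), and defects of type (ii) by cycle-splitting at the duplicate positions.

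The workhorse for run-length corrections is an averaging argument: if an adjustment only adds or removes periods whose per-period gain is some common $g^*$, the new average compares with the current average in a direction determined by whether $g^*$ exceeds or falls below it. In the unfavorable direction, the constant cycle $(\rho_t)$---which is trivially \luid---already weakly dominates, so we are done. Reference-monotonicity is crucial here: extending a run of $\rho_t$ can only weakly raise the references (and hence the gains) in subsequent periods by pushing older, lower prices out of memory, so only the adjusted periods themselves need to be accounted for explicitly. This template handles truncating $k_t>\ell$ on up-moves, truncating $k_t>1$ on down-moves, and extending $k_t<\ell$ on up-moves.

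For cycle-splitting, if $\rho_a=\rho_b$ with $a\neq b$, the plan is to split at these positions into two sub-cycles $C_1=\rho_a^{k_a}\cdots\rho_{b-1}^{k_{b-1}}$ and $C_2=\rho_b^{k_b}\cdots\rho_{a-1\bmod m}^{k_{a-1\bmod m}}$ and argue that at least one of them has long-run average at least that of the original. The subtle point is that the references a sub-cycle sees in isolation differ from those seen at the same positions inside the combined cycle, because the memory at the start of each sub-cycle depends on whichever cycle precedes it; reference-monotonicity is again the key tool, letting us compare reference sequences position-by-position and control the direction of change. To sequence these transformations, I would use a lexicographic potential function---first the ``duplicate excess'' $m-|\{\rho_0,\ldots,\rho_{m-1}\}|$ of the generator, then the total $L^1$ deviation of the run lengths from the \luid pattern---and verify each transformation strictly decreases it without worsening the average; termination is guaranteed because cycle lengths are bounded by $|\cP|^\ell$. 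The most delicate step is the cycle-splitting argument, since controlling the reference mismatch at sub-cycle boundaries requires simultaneous reasoning about $C_1$ and $C_2$ via the combinatorial ``$\min$-over-past-$\ell$'' structure the paper itself flags as nontrivial.
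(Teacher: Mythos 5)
Your overall architecture---a sequence of average-non-decreasing transformations, with a potential function to guarantee termination, ending in a split at duplicated prices---matches the spirit of the paper's proof, and your cycle-splitting step is essentially the paper's \Cref{lem:reset}. But there is a genuine gap in your run-length correction step, which is where the real work lies. First, the premise of your averaging template is false for extensions: if you pad an up-move run $\rho_t^{k_t}$ (with $k_t<\ell$) out to $\rho_t^\ell$, the inserted copies do \emph{not} share a common per-period gain $g^*$. Each inserted copy at position $j\le\ell$ of the run has a window that still reaches back past the start of the run, so its reference is $\min\{\rho_t,\text{pre-run prices}\}\le\rho_{t-1\bmod m}<\rho_t$, and these references vary with $j$. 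Second, your fallback fails in exactly the unfavorable case: when the inserted copies' average gain falls below the cycle average, you would like the constant cycle $(\rho_t)$ to dominate, but the inserted copies' gains are \emph{upper}-bounded by $g(\rho_t,\rho_t)$ (their references are at most $\rho_t$), so knowing they fall below the cycle average says nothing about whether $g(\rho_t,\rho_t)$ exceeds it---the inequality points the wrong way. (The fallback does work for truncating $k_t>\ell$ on an up-move, since the excess copies have reference exactly $\rho_t$; but for truncating a down-move run, removing copies lets subsequent windows reach further back to possibly smaller prices, so subsequent references can strictly \emph{decrease}, contradicting your claim that only the adjusted periods need accounting.)

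The missing idea is that one should not pad runs at all. The paper's \Cref{lem:reduction} instead isolates the maximal segments lying strictly above their references (the stretches between consecutive ``low points'') and replaces each entire segment $\pi_0\cdots\pi_{t'-1}$ by one of $\ell$ candidate \emph{decimations} $\pi_j^\ell\pi_{\ell+j}^\ell\cdots$, $j=0,\ldots,\ell-1$, keeping only every $\ell$-th price of the segment and repeating each $\ell$ times. A regrouping of the original segment's gain by residue class modulo $\ell$, combined with the reference-monotone bound $g(\min\{\pi_{t-\ell},\ldots,\pi_{t-1}\},\pi_t)\le g(\min\{\pi_{t-\ell},\pi_t\},\pi_t)$ and the mediant inequality, shows the best of the $\ell$ candidates weakly improves the average. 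For a segment like $2345$ with $\ell=3$ this yields $2^35^3$, $3^3$, or $4^3$---not $2^33^34^35^3$, which your padding would produce and which can dilute the cycle average with no valid rescue. After this normalization every remaining defect (including $\ell$ copies on a down-move and duplicated prices) is handled by splitting at two \emph{reset points} carrying the same price; splitting at reset points rather than at arbitrary duplicates is what makes the boundary references in the two sub-cycles agree exactly with those in the original, so the objective decomposes as an exact weighted average and no monotonicity argument is needed at the seam.
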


\Cref{thm:theoryMain} shows that the optimization problem can be restricted to \luid price cycles, and we will use this result in \Cref{sec:theoryLooseEnds} to construct polynomial-time algorithms.  We first provide some intuition for \Cref{thm:theoryMain} by showing that without the reference-monotonicity assumption, the unique optimal price cycle can be quite complex.

\begin{example}[Necessity of Reference-monotonicity] \label{ex:withoutRefMon}
Let $\ell=2$, $\cP=\{1,2,3,4\}$, and $g(r,p)$ be defined as in \Cref{tab:strangeEg}.  The optimal price cycle for this example is 414243, whose objective~\eqref{eqn:cycleObj} equals
\begin{align*}
\frac16 \big(g(3,4)+g(3,1)+g(1,4)+g(1,2)+g(2,4)+g(2,3)\big)=1.
\end{align*}
It is easy to check that no other price cycle (\luid or not) can have an average gain of 1.
\end{example}

\begin{table}[]
\centering
\begin{tabular}{c|cccc}
$g(r,p)$ & $p=1$ & $p=2$ & $p=3$ & $p=4$ \\
\hline
$r=1$ & 0 & 1 & 0 & 1 \\
$r=2$ & 0 & 0 & 1 & 1 \\
$r=3$ & 1 & 0 & 0 & 1 \\
$r=4$ & 0 & 0 & 0 & 0 \\
\end{tabular}
\caption{The gain function $g$ used for \Cref{ex:withoutRefMon}.  A reference-monotone $g$ would be weakly increasing down each column; this gain function is not.}
\label{tab:strangeEg}
\end{table}

\subsection{Proof of \Cref{thm:theoryMain}}

Our goal is to prove that for any given $c$ and price cycle $\pi_0,\ldots,\pi_{c-1}$, its objective value~\eqref{eqn:cycleObj} can be upper-bounded by that of an \luid price cycle.
Although our proof is non-constructive, the result of \Cref{thm:theoryMain} itself can be used formulate a reduced MDP that allows for polynomial-time algorithms, as we show in \Cref{sec:theoryLooseEnds}.

Our proof proceeds in two steps.
\Cref{lem:reduction} first reduces the given price cycle to an intermediate form, and the proof then repeatedly applies \Cref{lem:reset} until we end up with an \luid price cycle, whose objective value has not decreased.
To aid the reader, we provide the following example illustrating the two steps of the reduction, that can be referenced while reading the proofs.

\begin{example}
Let $\cP=\{1,2,3,4,5,6\}$, $\ell=3$, and suppose we are given the price cycle below.  Its \textit{low points} (as defined in the proof of \Cref{lem:reduction}) are underlined.
$$
\underline{1}2345\underline{3}4\underline{2}65345\underline{3}
$$
The substrings between the low points are replaced in the following way:
\begin{itemize}
\item Substring 2345 between the first two low points is replaced with either 222555, 333, or 444;
\item Substring 4 is replaced with the empty string, or it is concluded that the objective value of the price cycle is worse than that of constantly offering price 4;
\item Substring 65345 is replaced with either 666444, 555555, or 333;
\item The empty substring between the final \underline{3} that wraps around to the initial \underline{1} is unchanged.
\end{itemize}
\Cref{lem:reduction} shows that at least one combination of these replacements would not decrease the objective value.
Suppose for illustration this replacement is
$$
\textbf{\underline{1}}22\textbf{2}55\textbf{5\underline{3}\underline{2}}66\textbf{6444\underline{3}}
$$
where the original low points are still underlined and now we have also bolded the \textit{reset points} (as defined in \Cref{lem:reset}).

The second part of the proof finds two distinct reset points with the same number, and breaks the cycle into two starting from these points, wrapping around as necessary.  For example, taking the 3's as the reset points, we can consider the two shorter cycles below:
$$
\textbf{\underline{3}\underline{2}}66\textbf{6444}; \qquad \textbf{\underline{3}}\textbf{\underline{1}}22\textbf{2}55\textbf{5}.
$$
By \Cref{lem:reset}, the better of these cycles has objective value no less than that of the original cycle.
We take the better cycle and repeat this process until there are no distinct reset points with the same number.  For example, if the better cycle was 32666444, then we might end up with the cycle 326664, or just 4 (both of which are \luid price cycles).  On the other hand, if the better cycle was 31222555, then there is no further reduction.
In either case, the process must terminate with an \luid price cycle. \qed
\end{example}

We now proceed with the formal proof.

\begin{lemma} \label{lem:reduction}
For any price cycle $\pi_0,\ldots,\pi_{c-1}$, its objective value~\eqref{eqn:cycleObj} is upper-bounded by that of a price cycle represented by a string of the form $\rho_0^{k_0}\cdots\rho_{d-1}^{k_{d-1}}$,
where $d$ is a positive integer, and for all $t=0,\ldots,d-1$ we have $\rho_t\in\cP$, $k_t\in\{1,\ell\}$, and $k_t=1$ implying $\rho_t\le\rho_{t-1\mod d}$.
\end{lemma}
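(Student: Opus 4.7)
My plan is to identify \emph{low points} in the given price cycle, partition the cycle into segments at these low points, and replace each inter-low-point substring with a carefully chosen concatenation of length-$\ell$ constant blocks, arguing via reference-monotonicity that objective~\eqref{eqn:cycleObj} does not decrease. I would define a low point as a position $t$ where $\pi_t$ is strictly less than the minimum of the previous $\ell-1$ prices (cyclically), so that $\pi_t$ becomes a new reference for the subsequent positions. If the cycle has no low point, then its minimum price is always the reference, and the entire cycle can be replaced by a single length-$\ell$ block of an appropriately chosen price (yielding the claimed $d=1$, $k_0=\ell$ form). Otherwise, the low points partition the cycle into segments flanked by consecutive low points $\pi_a$ and $\pi_b$; the low points themselves will play the role of the length-$1$ blocks (with $\rho_t \le \rho_{t-1}$ by virtue of being low points), while each intermediate substring $\pi_{a+1},\ldots,\pi_{b-1}$ is to be replaced by a concatenation of length-$\ell$ blocks.

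For each segment's intermediate substring, I would enumerate a small family of candidate replacements, for instance a single length-$\ell$ block of the substring's minimum, maximum, or an interior value, and a two-block concatenation using the minimum followed by the maximum. The key observation is that during a length-$\ell$ block of price $p$ placed immediately after low point $\pi_a$, the reference equals $\pi_a$ at every position of the block (since $\pi_a$ stays in the look-back window and is the minimum there), so the block's contribution is simply $\ell\cdot g(\pi_a,p)$ and each candidate's total gain is easy to compute. Reference-monotonicity additionally ensures that the gain at the subsequent low point $\pi_b$ is non-decreasing whenever the intervening references are raised above their original values, which will be the case for candidates whose block prices exceed $\pi_a$.

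The crux is an averaging argument showing that at least one candidate's total gain, after cycle-length normalization, is at least the original segment's contribution. Because candidates can have lengths differing from the original substring, one must weight them by their resulting cycle lengths (not just their gains) to correctly average against the cycle-average objective~\eqref{eqn:cycleObj}; a pigeonhole argument over such a weighted combination then yields an individually dominating candidate. The main obstacle will be choosing the right set of candidates and weights so that the averaging works uniformly across all possible shapes of intermediate substrings, in particular handling the case when $b-a\le\ell$ (so that flanking low points directly influence each other's references) and the case when the substring's prices span a wide range so that a single block cannot simultaneously match the contributions at both endpoints. Reference-monotonicity will be the essential tool for bounding the complicated sliding-window references in the original substring by the simpler, essentially constant, references that arise in the candidates.
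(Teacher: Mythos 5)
Your skeleton matches the paper's: identify low points, replace each inter-low-point substring by a concatenation of length-$\ell$ constant blocks, keep the low points as the $k_t=1$ tokens, and use a length-weighted averaging (mediant) argument plus reference-monotonicity. But you leave the crux unresolved, and your proposed candidate set would not close it. The paper's replacement family for a substring $\pi_0\cdots\pi_{t'-1}$ (re-indexed to start after a low point) is not "min block / max block / interior block / min-then-max": it is the $\ell$ arithmetic-progression subsamples $\pi_j^{\ell}\,\pi_{\ell+j}^{\ell}\,\pi_{2\ell+j}^{\ell}\cdots$ for $j=0,\ldots,\ell-1$. This specific choice is what makes the averaging an exact decomposition rather than a hopeful pigeonhole: one first bounds each original term via $g(\min\{\pi_{t-\ell},\ldots,\pi_{t-1}\},\pi_t)\le g(\min\{\pi_{t-\ell},\pi_t\},\pi_t)$ (valid because $t$ is not a low point, so $\pi_t$ exceeds its reference, and then reference-monotonicity applies), after which the upper bound splits \emph{exactly} into the $\ell$ residue-class subsums, each of which is (at most) the gain of the corresponding candidate, with the lengths also summing correctly so that the mediant inequality of \Cref{ftnt:elem} applies. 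Your candidates (single blocks of the min, max, or an interior value) cannot reproduce the multiset of terms $g(\cdot,\pi_t)$ appearing in the original substring when the substring contains several distinct prices, so no weighting of them dominates the original contribution for a general reference-monotone $g$ — this is exactly the "main obstacle" you flag and do not overcome.

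Two smaller issues. First, the short-substring case ($t'<\ell$, your $b-a\le\ell$) is not handled by any block replacement: there the paper either deletes the substring entirely (empty replacement) or concludes the whole cycle is dominated by a constant price cycle $g(\pi_t,\pi_t)$; your framework has no analogue of the "bail out to a constant cycle" branch, which is genuinely needed because a length-$\ell$ block inserted where fewer than $\ell$ positions stood changes the cycle length in a way the averaging cannot absorb. Second, your low-point definition (strictly below the previous $\ell-1$ prices) should be weakened to $\pi_t\le\min\{\pi_{t-\ell\mod c},\ldots,\pi_{t-1\mod c}\}$; with the weak version a low point always exists (the global minimum qualifies), so the "no low point" branch of your argument is unnecessary, and the weak version is also what guarantees each kept token satisfies $\rho_t\le\rho_{t-1\mod d}$ as required by the lemma.
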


\begin{proof}[Proof of \Cref{lem:reduction}]
In the initial price cycle $\pi_0,\ldots,\pi_{c-1}$, define a time $t\in\{0,\ldots,c-1\}$ to be a \textit{low point} if $\pi_t\le\min\{\pi_{t-\ell\mod c},\ldots,\pi_{t-1\mod c}\}$.  That is, $t$ is a low point if it is at most the reference price at time $t$.
Let $t_1,\ldots,t_m$ denote the low points in the initial cycle, with $0\le t_1<\cdots<t_m\le c-1$, noting that there must be at least one low point.

We consider the substrings of prices between low points: $$\pi_{t_m+1\mod c}\cdots\pi_{t_1-1\mod c};\quad \pi_{t_1+1\mod c}\cdots\pi_{t_2-1\mod c};\quad \ldots \quad;\quad\pi_{t_{m-1}+1\mod c}\cdots\pi_{t_m-1\mod c}.$$  Note that there are $m$ such substrings, some of which may be empty.
We iteratively replace these substrings with new substrings in which every token appears $\ell$ consecutive times, showing that the objective value~\eqref{eqn:cycleObj} does not go down, and that all low points are preserved.
This would complete the proof because after the $m$ replacements, all tokens would either appear $\ell$ consecutive times, or be a low point which implies that it is no greater than the immediately preceding price.

To ease notation, at each iteration we re-index (rotate) the cycle so that the current substring under consideration is $\pi_0\cdots\pi_{t'-1}$, preceded by a low point at time $c-1$ and succeeded by a low point at time $t'$.  None of $0,\ldots,t'-1$ being low points guarantees that $\pi_t>\pi_{c-1}$ for all $t=0,\ldots,t'-1$, because otherwise the smallest $t\in\{0,\ldots,t'-1\}$ for which $\pi_t\le\pi_{c-1}$ would be a low point.

\paragraph{Case 1: $t'<\ell$.}
Because $t'<\ell$, the reference price for each time $t=0,\ldots,t'-1$ is at most (in fact equal to) $\pi_{c-1}$.  Therefore, we can express the objective value of the current price cycle as
\begin{align*}
&\frac{1}{c}\left(\sum_{t=0}^{t'-1} g(\pi_{c-1},\pi_t) + \sum_{t=t'}^{c-1}g(\min\{\pi_{t-\ell\mod c},\ldots,\pi_{t-1\mod c}\},\pi_t)\right)
\\ &\le\frac{\sum_{t=0}^{t'-1} g(\pi_t,\pi_t) + \sum_{t=t'}^{c-1}g(\min\{\pi_{t-\ell\mod c},\ldots,\pi_{t-1\mod c}\},\pi_t)}{t'+(c-t')}
\\ &\le\max\left\{g(\pi_0,\pi_0),\ldots,g(\pi_{t'-1},\pi_{t'-1}),\frac{\sum_{t=t'}^{c-1}g(\min\{\pi_{t-\ell\mod c},\ldots,\pi_{t-1\mod c}\},\pi_t)}{c-t'}\right\}
\end{align*}
where the first inequality follows from reference-monotonicity because $\pi_t>\pi_{c-1}$ for all $t=0,\ldots,t'-1$, and the second inequality follows elementarily\footnote{For real numbers $a_1,\ldots,a_n$ and $b_1,\ldots,b_n>0$, it holds that $(a_1+\cdots+a_n)/(b_1+\cdots+b_n)\le \max\{\frac{a_1}{b_1},\ldots,\frac{a_n}{b_n}\}$.\label{ftnt:elem}}.
If the $\max$ in the final expression is attained at any of the first $t'$ arguments, then \Cref{lem:reduction} would be immediately proven, because we have upper-bounded the objective value by a price cycle that is the constant price $\pi_t$ for some $t\in\{0,\ldots,t'-1\}$.

Otherwise, if the $\max$ in the final expression is attained at the final argument, then we claim this is equal to the objective value of the price cycle $\pi_{t'}\pi_{t'+1}\cdots\pi_{c-1}$ (formed by replacing $\pi_0\cdots\pi_{t'-1}$ with the empty substring).  To see this, note that the reference price at any time $t\ge t'+\ell$ is unchanged.  For time steps $t=t',\ldots,t'+\ell-1$, the reference price is still $\pi_{t'}$ because $\pi_{t'}\le\pi_{c-1}$ by virtue of $t'<\ell$ and $t'$ being a low point, and $\pi_{c-1}\le\min\{\pi_{c-1-\ell\mod c},\ldots,\pi_{c-2\mod c}\}$ by virtue of $c$ being a low point (which can be checked to hold under the new cycle even if $\pi_{c-1-\ell\mod c}$ wraps around).  That is, both $t'$ and $c-1$ are still low points in the new price cycle.
This completes the proof of Case 1.

\paragraph{Case 2: $t'\ge\ell$.}
We consider replacing $\pi_0\cdots\pi_{t'-1}$ with a substring of the form
\begin{align} \label{eqn:substrings}
\pi_j^\ell \pi_{\ell+j}^\ell\pi_{2\ell+j}^\ell\cdots\pi_{\lfloor (t'-1-j)/\ell\rfloor\ell+j}^\ell, && j\in\{0,\ldots,\ell-1\}.
\end{align}
For each substring $j=0,\ldots,\ell-1$, note that it has length $(1+\lfloor (t'-1-j)/\ell\rfloor)\ell$.
The reference price for the $\ell$ copies of its first price $\pi_j$ is $\pi_{c-1}=\min\{\pi_{c-1},\pi_j\}$, because $c-1$ is a low point and $\pi_j>\pi_{c-1}$.
Meanwhile, for all $j'=1,\ldots,\lfloor (t'-1-j)/\ell\rfloor$, the reference price for the $\ell$ copies of its price $\pi_{j'\ell+j}$ is at least $\min\{\pi_{(j'-1)\ell+j},\pi_{j'\ell+j}\}$.
Finally, for any replacement substring $j$ the reference prices at all times $t\ge t'$ remain unchanged, because $\pi_{t'}\le\pi_{\lfloor (t'-1-j)/\ell\rfloor\ell+j}$ by virtue of $t'$ being a low point and $\pi_{\lfloor (t'-1-j)/\ell\rfloor\ell+j}$ existing (where $\lfloor (t'-1-j)/\ell\rfloor\ge0$ due to $t'\ge\ell$).
Therefore, the objective value under each replacement substring $j=0,\ldots,\ell-1$ is at least
\begin{align*}
\frac{\ell g(\pi_{c-1},\pi_j)
+\ell\sum_{j'=1}^{\lfloor (t'-1-j)/\ell\rfloor} g(\min\{\pi_{(j'-1)\ell+j},\pi_{j'\ell+j}\},\pi_{j'\ell+j})+\sum_{t=t'}^{c-1}g(\min\{\pi_{t-\ell},\ldots,\pi_{t-1}\})}{(1+\lfloor (t'-1-j)/\ell\rfloor)\ell+c-t'}.
\end{align*}

Now, the objective value of the current price cycle with substring $\pi_0\cdots\pi_{t'-1}$ can be bounded:
\begin{align*}
&\frac{1}{c}\left(\sum_{t=0}^{\ell-1} g(\pi_{c-1},\pi_t)
+\sum_{t=\ell}^{t'-1} g(\min\{\pi_{t-\ell},\ldots,\pi_{t-1}\},\pi_t)
+\sum_{t=t'}^{c-1}g(\min\{\pi_{t-\ell},\ldots,\pi_{t-1}\},\pi_t)\right)
\\ &\le\frac{1}{c}\left(\sum_{t=0}^{\ell-1} g(\pi_{c-1},\pi_t)
+\sum_{t=\ell}^{t'-1} g(\min\{\pi_{t-\ell},\pi_t\},\pi_t)
+\sum_{t=t'}^{c-1}g(\min\{\pi_{t-\ell},\ldots,\pi_{t-1}\},\pi_t)\right)
\\ &=\frac{1}{c}\left(\sum_{j=0}^{\ell-1} \left(g(\pi_{c-1},\pi_j)
+\sum_{j'=1}^{\lfloor (t'-1-j)/\ell\rfloor} g(\min\{\pi_{(j'-1)\ell+j},\pi_{j'\ell+j}\},\pi_{j'\ell+j})\right)
+\sum_{t=t'}^{c-1}g(\min\{\pi_{t-\ell},\ldots,\pi_{t-1}\},\pi_t)\right)
\\ &=\frac{1}{c}\left(\frac1\ell\sum_{j=0}^{\ell-1} \left(\ell g(\pi_{c-1},\pi_j)
+\ell\sum_{j'=1}^{\lfloor (t'-1-j)/\ell\rfloor} g(\min\{\pi_{(j'-1)\ell+j},\pi_{j'\ell+j}\},\pi_{j'\ell+j})
+\sum_{t=t'}^{c-1}g(\min\{\pi_{t-\ell},\ldots,\pi_{t-1}\},\pi_t)\right)\right)
\\ &=\frac{\sum_{j=0}^{\ell-1} \left(\ell g(\pi_{c-1},\pi_j)
+\ell\sum_{j'=1}^{\lfloor (t'-1-j)/\ell\rfloor} g(\min\{\pi_{(j'-1)\ell+j},\pi_{j'\ell+j}\},\pi_{j'\ell+j})
+\sum_{t=t'}^{c-1}g(\min\{\pi_{t-\ell},\ldots,\pi_{t-1}\},\pi_t)\right)}{\ell(c-t')+\ell\sum_{j=0}^{\ell-1}(1+\lfloor (t'-1-j)/\ell\rfloor)}
\\ &\le\max_{j=0,\ldots,\ell-1}\frac{\ell g(\pi_{c-1},\pi_j)
+\ell\displaystyle\sum_{j'=1}^{\lfloor (t'-1-j)/\ell\rfloor} g(\min\{\pi_{(j'-1)\ell+j},\pi_{j'\ell+j}\},\pi_{j'\ell+j})
+\sum_{t=t'}^{c-1}g(\min\{\pi_{t-\ell},\ldots,\pi_{t-1}\},\pi_t)}{1+(\lfloor (t'-1-j)/\ell\rfloor)\ell+c-t'}.
\end{align*}
The first inequality uses the fact that $\pi_t>\min\{\pi_{t-\ell},\ldots,\pi_{t-1}\}$ because $t$ is not a low point, and hence $g(\min\{\pi_{t-\ell},\ldots,\pi_{t-1}\},\pi_t)=g(\min\{\pi_{t-\ell},\ldots,\pi_{t}\},\pi_t)\le g(\min\{\pi_{t-\ell},\pi_{t}\},\pi_t)$ (with the inequality applying reference-monotonicity).
The first equality partitions the integers $0,\ldots,t'-1$ based on their remainder $j$ when divided by $\ell$, noting that $\lfloor (t'-1-j)/\ell\rfloor\ell+j$ is the largest integer less than $t'$ with a remainder of $j$ when divided by $\ell$.
The third equality holds because $\sum_{j=0}^{\ell-1}(1+\lfloor (t'-1-j)/\ell\rfloor)=t'$, which is easiest to see from the fact that there are $1+\lfloor (t'-1-j)/\ell\rfloor$ integers in $\{0,\ldots,t'-1\}$ with remainder $j$ when divided by $\ell$.
The final inequality follows from \Cref{ftnt:elem}.

Therefore, we can always replace $\pi_0\cdots\pi_{t'-1}$ with one of the substrings from~\eqref{eqn:substrings} without decreasing the objective value.  Moreover, $t'$ remains a low point because it is now preceded by $\ell$ copies of $\lfloor (t'-1-j)/\ell\rfloor\ell+j$, which appeared in $\{\pi_{t'-\ell},\ldots,\pi_{t'-1}\}$ and hence is at least $\pi_{t'}$.  This completes the proof of Case 2 and the overall proof of \Cref{lem:reduction}.
\end{proof}

\begin{lemma} \label{lem:reset}
In a price cycle $\pi_0,\ldots,\pi_{c-1}$, define a time $t\in\{0,\ldots,c-1\}$ to be a \textit{reset point} if $\pi_t=\min\{\pi_{t-\ell+1\mod c},\ldots,\pi_{t\mod c}\}$.  Suppose the price cycle $\pi_0,\ldots,\pi_{c-1}$ contains two distinct reset points with the same price, relabeled to be $t'-1$ and $c-1$ (i.e., $t'-1,c-1$ are both reset points with $t'<c$ and $\pi_{t'-1}=\pi_{c-1}$).  Then the objective value~\eqref{eqn:cycleObj} of the price cycle must be upper-bounded by that of either the price cycle $\pi_0,\ldots,\pi_{t'-1}$ or the price cycle $\pi_{t'},\ldots,\pi_{c-1}$.
\end{lemma}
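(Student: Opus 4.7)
My plan is to compare the gains of the two sub-cycles term-by-term with those of the original cycle, show that each sub-cycle's per-time-step gain weakly dominates the corresponding original gain, and then finish via the elementary inequality $\tfrac{a+b}{x+y}\le\max\{\tfrac{a}{x},\tfrac{b}{y}\}$ for positive $x,y$.

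\textbf{Comparing references.} Fix $t\in\{0,\ldots,t'-1\}$; the case $t\in\{t',\ldots,c-1\}$ is symmetric. Let $r_t$ denote the reference at time $t$ in the original cycle and $\tilde r_t$ the reference at time $t$ when $\pi_0,\ldots,\pi_{t'-1}$ is played as its own cycle of length $t'$. If $t\ge\ell$ there is no wrap in either cycle and $\tilde r_t=r_t=\min\{\pi_{t-\ell},\ldots,\pi_{t-1}\}$. Otherwise $t<\ell$: in the original cycle the wrapped window includes $\pi_{c-\ell+t},\ldots,\pi_{c-1}$, and since $c-1$ is a reset point we have $\pi_{c-1}=\min\{\pi_{c-\ell},\ldots,\pi_{c-1}\}$, so the minimum over this wrapped portion collapses to $\pi_{c-1}$ and $r_t=\min\{\pi_{c-1},\pi_0,\ldots,\pi_{t-1}\}$. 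If additionally $t'\ge\ell$, the same collapse happens in the sub-cycle using the reset point at $t'-1$, giving $\tilde r_t=\min\{\pi_{t'-1},\pi_0,\ldots,\pi_{t-1}\}=r_t$ because $\pi_{t'-1}=\pi_{c-1}$. If instead $t'<\ell$, the sub-cycle window sweeps the whole sub-cycle, so $\tilde r_t=\min\{\pi_0,\ldots,\pi_{t'-1}\}=\pi_{t'-1}=\pi_{c-1}\ge r_t$. Either way $\tilde r_t\ge r_t$, so reference-monotonicity yields $g(\tilde r_t,\pi_t)\ge g(r_t,\pi_t)$.

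\textbf{Combining.} A symmetric argument gives the pointwise comparison between the gains in sub-cycle $\pi_{t'},\ldots,\pi_{c-1}$ and the original gains at times $t',\ldots,c-1$. Writing $G_A$, $G_B$ for the original gains summed over $\{0,\ldots,t'-1\}$ and $\{t',\ldots,c-1\}$ respectively, and $\tilde G_A$, $\tilde G_B$ for the analogous sub-cycle totals, we get $\tilde G_A\ge G_A$ and $\tilde G_B\ge G_B$, and hence
\[
\frac{G_A+G_B}{c}\le\frac{\tilde G_A+\tilde G_B}{t'+(c-t')}\le\max\!\left\{\frac{\tilde G_A}{t'},\frac{\tilde G_B}{c-t'}\right\},
\]
which is the maximum of the two sub-cycle objectives, as desired.

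\textbf{Main obstacle.} The delicate point is the case where one of the sub-cycles is shorter than the memory length $\ell$, in which the sub-cycle's reference window wraps around multiple times and the references no longer coincide with the original's. Identifying that in this case the reset-point property still forces the sub-cycle reference to be \emph{at least} the original reference (rather than equal to it), so that reference-monotonicity yields the desired pointwise comparison, is the main technical ingredient of the proof.
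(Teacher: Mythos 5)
Your proof is correct and follows essentially the same route as the paper's: split the cycle at the two matching reset points, use the reset-point property to identify the references in each sub-cycle with those in the original, and conclude via the mediant inequality $\frac{a+b}{x+y}\le\max\{\frac{a}{x},\frac{b}{y}\}$. The only (immaterial) difference is that in the short-sub-cycle case you settle for $\tilde r_t\ge r_t$ plus reference-monotonicity, whereas the paper observes that the reset-point property in fact forces exact equality of the references, so the two numerators sum exactly to the original numerator and monotonicity is not needed in this lemma.
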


\begin{proof}[Proof of \Cref{lem:reset}]
The objective value of the original price cycle $\pi_0,\ldots,\pi_{c-1}$ equals
\begin{align*}
\frac 1{t'+(c-t')} \Bigg(&\sum_{t=0}^{\min\{t'-1,\ell-1\}}g(\min\{\pi_{c-1},\pi_0,\ldots,\pi_{t-1}\},\pi_t)
+ \sum_{t=\min\{t'-1,\ell-1\}+1}^{t'-1}g(\min\{\pi_{t-\ell},\ldots,\pi_{t-1}\},\pi_t)
\\ &+ \sum_{t=t'}^{\min\{c-1,\ell-1\}}g(\min\{\pi_{t'-1},\pi_0,\ldots,\pi_{t-1}\},\pi_t)
+ \sum_{t=\min\{c-1,\ell-1\}+1}^{c-1}g(\min\{\pi_{t-\ell},\ldots,\pi_{t-1}\},\pi_t)\Bigg)
\end{align*}
where we have used the fact that both $t'-1$ and $c-1$ are reset points.
Using the same fact, the objective value of price cycle $\pi_0,\ldots,\pi_{t'-1}$ equals
$$
\frac1{t'}\left(\sum_{t=0}^{\min\{t'-1,\ell-1\}}g(\min\{\pi_{t'-1},\pi_0,\ldots,\pi_{t-1}\},\pi_t)
+ \sum_{t=\min\{t'-1,\ell-1\}+1}^{t'-1}g(\min\{\pi_{t-\ell},\ldots,\pi_{t-1}\},\pi_t)\right)
$$
while the objective value of price cycle $\pi_{t'},\ldots,\pi_{c-1}$ equals
$$
\frac1{c-t'}\left(\sum_{t=t'}^{\min\{c-1,\ell-1\}}g(\min\{\pi_{c-1},\pi_0,\ldots,\pi_{t-1}\},\pi_t)
+ \sum_{t=\min\{c-1,\ell-1\}+1}^{c-1}g(\min\{\pi_{t-\ell},\ldots,\pi_{t-1}\},\pi_t)\right).
$$
Because $\pi_{t'-1}=\pi_{c-1}$, it follows
from \Cref{ftnt:elem}
that the objective value of the original price cycle is upper-bounded by the maximum of the latter two expressions.
\end{proof}

\begin{proof}[Completing the proof of \Cref{thm:theoryMain}]
Consider the upper-bounding price cycle that is the result of the reduction in \Cref{lem:reduction}.
For $t=0,\ldots,d-1$, first suppose $k_t=1$.  Then it is guaranteed that $\rho_t\le\rho_{t-1\mod d}$, which means that $\rho_t$ corresponds to a reset point.  Indeed, this is immediate if $k_{t-1\mod d}=\ell$, and if $k_{t-1\mod d}=1$ then $\rho_{t-1\mod d}\le\rho_{t-2\mod d}$ so we can iteratively apply the same argument.  On the other hand, now suppose $k_t=\ell$.  In this case, if $\rho_t\le\rho_{t-1\mod d}$, then the $\ell$ copies of $\rho_t$ all correspond to reset points, by the same argument as before; if otherwise $\rho_t>\rho_{t-1\mod d}$, then only the final copy of $\rho_t$ corresponds to a reset point.

All in all, we have proven that every $t=0,\ldots,d-1$ corresponds to at least 1 reset point, and corresponds to $\ell$ reset points if $k_t=\ell$ and $\rho_t\le\rho_{t-1\mod d}$.
We now iteratively apply \Cref{lem:reset} to reduce the price cycle whenever distinct reset points have the same price, noting that after each reduction the upper-bounding price cycle still takes the form described in \Cref{lem:reduction} and the same properties still hold.
Let $\rho_0^{k_0}\cdots\rho_{d-1}^{k_{d-1}}$ represent the final price cycle which no longer has distinct reset points with the same price.
It must be the case that $\rho_0,\ldots,\rho_{d-1}$ are distinct prices in $\cP$, and moreover, if $k_t=\ell$ then we must have $\rho_t>\rho_{t-1\mod d}$, because otherwise all $\ell$ copies of $\rho_t$ would be reset points.  (In the case where $\ell=1$, this argument is irrelevant.)  The original property from \Cref{lem:reduction} that if $k_t=1$ then $\rho_t\le\rho_{t-1\mod d}$ also still holds; in fact we would have $\rho_t<\rho_{t-1\mod d}$, by distinctness.  This shows that the the final price cycle must satisfy precisely the definition of \luid, completing the proof of \Cref{thm:theoryMain}.
\end{proof}

\subsection{Computational Consequences; Tightness of \luid Characterization} \label{sec:theoryLooseEnds}

\Cref{thm:theoryMain} shows that to maximize long-run average gain~\eqref{eqn:cycleObj} over all positive integers $c\le|\cP|^\ell$ and price cycles $(\pi_0,\ldots,\pi_{c-1})\in\cP^c$, it suffices to search over \luid price cycles, which are defined by a generator cycle of at most $d$ distinct prices in $\cP$ (see \Cref{def:luid}).  We now show that the latter problem can be formulated using a reduced MDP.

For any generator cycle $\rho_0,\ldots,\rho_{d-1}$ of distinct prices in $\cP$, its long-run average gain~\eqref{eqn:cycleObj} equals
\begin{align} \label{eqn:lragOfGenerator}
\frac{\sum_{t=0}^{d-1}g(\rho_{t-1\mod d},\rho_t)(1+(\ell-1)\bI(\rho_t>\rho_{t-1\mod d}))}{\sum_{t=0}^{d-1}(1+(\ell-1)\bI(\rho_t>\rho_{t-1\mod d}))},
\end{align}
because the reference price is always $\rho_{t-1\mod d}$ while offering $\rho_t$, for all $t=0,\ldots,d-1$, due to the \luid structure that the previous price $\rho_{t-1\mod d}$ is repeated $\ell$ times if it is higher than the price before it.

We now construct an MDP where both the state and action space is $\cP$.  When any action $p\in\cP$ is taken from any state $r\in\cP$, the next state is always $p$;  the reward is $g(r,p)$ if $r\ge p$, and $\ell g(r,p)$ if $r<p$ but the \textit{transition takes $\ell$ steps} (instead of 1 step, so the reward-per-step is still $g(r,p)$).
The objective is to maximize long-run average reward per step.
We again restrict without loss to stationary deterministic policies\footnote{Technically this is a Semi-Markov Decision Process due to the inhomogeneous transition times, but the optimality of stationary deterministic policies still holds.  We can alternatively have homogeneous transition times if for all transitions from states $r$ to $p$ with $r<p$, we add $\ell-1$ dummy states in the middle with a single action for proceeding.}, to see that the optimal policy is defined by a cycle $\rho_0,\ldots,\rho_{d-1}$ of distinct states in $\cP$.
Moreover, the long-run average reward per step of this policy equals exactly~\eqref{eqn:lragOfGenerator}.

Therefore, the optimization problem over \luid price cycles is captured by this reduced MDP, which can be solved via the following system of infinite-horizon undiscounted Bellman's equations:
\begin{align} \label{eqn:bellman}
h(r) &=\max_{p\in\cP}\Big(\big(g(r,p)-\OPT\big)\big(1+(\ell-1)\bI(r<p)\big)+h(p)\Big) &\forall r\in\cP.
\end{align}
Here, variable $\OPT$ denotes the optimal long-run average reward, and $h(p)$ denotes the bias for each state $p$, one of which can be normalized to 0 so that there are both $|\cP|$ variables and equations.  The multiplication by the factor of $1+(\ell-1)\bI(r<p)$ accounts for the transition times.  For further details and algorithms that solve this in polynomial-time, we defer to \citet{puterman2014markov}.

We can use~\eqref{eqn:bellman} to also prove that our characterization of \luid policies is tight.

\begin{theorem}[proven in \Cref{pf:tightness}] \label{thm:tightness}
Let $\rho_0,\ldots,\rho_{d-1}$ be any cycle of distinct prices in $\cP$.
Under any memory length $\ell$, there exists a reference-monotone gain function $g:\cP^2\to\bR$ for which the unique optimal price cycle is $\rho_0^{k(\rho_{d-1},\rho_0)}\rho_1^{k(\rho_{0},\rho_1)}\cdots\rho_{d-1}^{k(\rho_{d-2},\rho_{d-1})}$, where $k(r,p):=1+(\ell-1)\bI(r<p)$.    
\end{theorem}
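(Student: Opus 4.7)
The plan is to construct $g$ via the reduced MDP Bellman equations \eqref{eqn:bellman} using a carefully chosen bias function $h^*:\cP\to\bR$ that is strictly increasing in price. Fix an arbitrary constant $\OPT\in\bR$, and for each target transition define
\[g(\rho_{t-1\mod d},\rho_t):=\OPT+\frac{h^*(\rho_{t-1\mod d})-h^*(\rho_t)}{k(\rho_{t-1\mod d},\rho_t)};\]
then extend $g$ to all $(r,p)\in\cP^2$ by setting $g(r,p):=g(\rho_{t-1\mod d},\rho_t)$ when $p=\rho_t$ for some $t$ and $r\ge\rho_{t-1\mod d}$, and $g(r,p):=-M$ otherwise, for a large constant $M$. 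Each column $g(\cdot,p)$ is then a weakly-increasing step function, so $g$ is reference-monotone, and telescoping shows via \eqref{eqn:lragOfGenerator} that the target cycle has long-run average exactly $\OPT$.

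Next I would verify that the target generator cycle is the unique Bellman-optimal policy in the reduced MDP. At state $r=\rho_{s-1\mod d}$ with non-target action $\rho_t$ (where $s\ne t$), either $r<\rho_{t-1\mod d}$, in which case $g=-M$ makes the Bellman slackness immediate, or $r>\rho_{t-1\mod d}$ strictly by distinctness of the $\rho$'s. In the latter case, ref-monotonicity forces $g=g_t$, and substituting the defining formula for $g_t-\OPT$ reduces the Bellman inequality to a strict weighted comparison of $h^*(\rho_{s-1\mod d})$ against $h^*(\rho_{t-1\mod d})$ and $h^*(\rho_t)$; this follows from strict price-monotonicity of $h^*$ after a short case analysis on whether $r$ lies above or below $\rho_t$. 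Consequently, no other generator cycle of distinct prices can match the target's long-run average.

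To conclude uniqueness over all price cycles in the original MDP (not just over \luid ones), I would appeal to \Cref{thm:theoryMain}, which bounds every price cycle's value by that of some \luid cycle, the latter uniquely maximized at the target. The main obstacle is ruling out a tie with a non-\luid cycle; my plan is to replace the step function by $g(r,\rho_t):=g_t+\epsilon(r-\rho_{t-1\mod d})$ for $r\ge\rho_{t-1\mod d}$ with a small $\epsilon>0$. This perturbation keeps $g$ reference-monotone (now strictly so), changes the reduced-MDP Bellman gaps by only $O(\epsilon)$ so strict inequalities survive for small enough $\epsilon$, and crucially makes the weak inequalities in the proofs of \Cref{lem:reduction} and \Cref{lem:reset} strict whenever applied to a non-target cycle, forcing any such cycle to have value strictly less than $\OPT$.
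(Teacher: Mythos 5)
Your construction is essentially the paper's: it likewise fixes a strictly increasing bias $h(1)<\cdots<h(d)$, defines $g(\rho_t)=\frac{h(\rho_{t-1\bmod d})-h(\rho_t)}{k(\rho_{t-1\bmod d},\rho_t)}+C$ on the target transitions, extends it as the step function $g(r,\rho_t)=g(\rho_t)\bI(r\ge\rho_{t-1\bmod d})$ (using $0$ rather than $-M$ off the step, with $C$ large enough that $g(\rho_t)>0$), and verifies strict slackness in the Bellman conditions~\eqref{eqn:bellman} exactly as you outline. The only divergence is your closing $\epsilon$-perturbation to exclude ties with non-\luid cycles: the paper's proof stops at strict uniqueness within the reduced MDP and implicitly relies on \Cref{thm:theoryMain} (which gives only weak domination) for the rest, so your extra step is a sensible, if not fully worked out, tightening of the same argument rather than a different approach.
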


\section{Conclusion and Post-mortem}

Personalizing promotions and curating user journeys is more possible than ever, yet optimizing these for the long-term is a perennially unsolved business problem.
Our paper combines experience from a successful real-world deployment of personalized promotions with a theoretical model and structural result for optimizing personalized promotion cycles for the long-term.
Unfortunately, our partner de-prioritized personalizing promotions and machine learning more broadly at the end of 2024, in favor of offering the biggest discount to all customers every day, removing all personalization and novelty of big discounts.
This prevented us from testing a more sophisticated \luid policy, and even our simple algorithm was phased out of deployment by early 2025.

%
%
%
%
%

\clearpage
\bibliographystyle{ACM-Reference-Format}
\bibliography{bibliography}


\begin{thebibliography}{31}


\ifx \showCODEN    \undefined \def \showCODEN     #1{\unskip}     \fi
\ifx \showDOI      \undefined \def \showDOI       #1{#1}\fi
\ifx \showISBNx    \undefined \def \showISBNx     #1{\unskip}     \fi
\ifx \showISBNxiii \undefined \def \showISBNxiii  #1{\unskip}     \fi
\ifx \showISSN     \undefined \def \showISSN      #1{\unskip}     \fi
\ifx \showLCCN     \undefined \def \showLCCN      #1{\unskip}     \fi
\ifx \shownote     \undefined \def \shownote      #1{#1}          \fi
\ifx \showarticletitle \undefined \def \showarticletitle #1{#1}   \fi
\ifx \showURL      \undefined \def \showURL       {\relax}        \fi
\providecommand\bibfield[2]{#2}
\providecommand\bibinfo[2]{#2}
\providecommand\natexlab[1]{#1}
\providecommand\showeprint[2][]{arXiv:#2}

\bibitem[Agrawal and Tang(2024)]%
        {agrawal2024dynamic}
\bibfield{author}{\bibinfo{person}{Shipra Agrawal} {and} \bibinfo{person}{Wei Tang}.} \bibinfo{year}{2024}\natexlab{}.
\newblock \showarticletitle{Dynamic Pricing and Learning with Long-term Reference Effects}. In \bibinfo{booktitle}{\emph{Proceedings of the 25th ACM Conference on Economics and Computation}}. \bibinfo{pages}{72--72}.
\newblock


\bibitem[Albert and Goldenberg(2022)]%
        {albert2022commerce}
\bibfield{author}{\bibinfo{person}{Javier Albert} {and} \bibinfo{person}{Dmitri Goldenberg}.} \bibinfo{year}{2022}\natexlab{}.
\newblock \showarticletitle{E-commerce promotions personalization via online multiple-choice knapsack with uplift modeling}. In \bibinfo{booktitle}{\emph{Proceedings of the 31st ACM International Conference on Information \& Knowledge Management}}. \bibinfo{pages}{2863--2872}.
\newblock


\bibitem[Baek et~al\mbox{.}(2025)]%
        {baek2025policy}
\bibfield{author}{\bibinfo{person}{Jackie Baek}, \bibinfo{person}{Justin~J Boutilier}, \bibinfo{person}{Vivek~F Farias}, \bibinfo{person}{Jonas~Oddur Jonasson}, {and} \bibinfo{person}{Erez Yoeli}.} \bibinfo{year}{2025}\natexlab{}.
\newblock \showarticletitle{Policy optimization for personalized interventions in behavioral health}.
\newblock \bibinfo{journal}{\emph{Manufacturing \& Service Operations Management}} \bibinfo{volume}{27}, \bibinfo{number}{3} (\bibinfo{year}{2025}), \bibinfo{pages}{770--788}.
\newblock


\bibitem[Besbes and Lobel(2015)]%
        {besbes2015intertemporal}
\bibfield{author}{\bibinfo{person}{Omar Besbes} {and} \bibinfo{person}{Ilan Lobel}.} \bibinfo{year}{2015}\natexlab{}.
\newblock \showarticletitle{Intertemporal price discrimination: Structure and computation of optimal policies}.
\newblock \bibinfo{journal}{\emph{Management Science}} \bibinfo{volume}{61}, \bibinfo{number}{1} (\bibinfo{year}{2015}), \bibinfo{pages}{92--110}.
\newblock


\bibitem[Calmon et~al\mbox{.}(2021)]%
        {calmon2021revenue}
\bibfield{author}{\bibinfo{person}{Andre~P Calmon}, \bibinfo{person}{Florin~D Ciocan}, {and} \bibinfo{person}{Gonzalo Romero}.} \bibinfo{year}{2021}\natexlab{}.
\newblock \showarticletitle{Revenue management with repeated customer interactions}.
\newblock \bibinfo{journal}{\emph{Management Science}} \bibinfo{volume}{67}, \bibinfo{number}{5} (\bibinfo{year}{2021}), \bibinfo{pages}{2944--2963}.
\newblock


\bibitem[Chang et~al\mbox{.}(2024)]%
        {chang2024pricing}
\bibfield{author}{\bibinfo{person}{Jiacheng Chang}, \bibinfo{person}{Xiao Lei}, {and} \bibinfo{person}{Feng Tian}.} \bibinfo{year}{2024}\natexlab{}.
\newblock \showarticletitle{Pricing and Addiction Control for Digital Services}.
\newblock \bibinfo{journal}{\emph{Available at SSRN 4962550}} (\bibinfo{year}{2024}).
\newblock


\bibitem[Chen et~al\mbox{.}(2017)]%
        {chen2017efficient}
\bibfield{author}{\bibinfo{person}{Xin Chen}, \bibinfo{person}{Peng Hu}, {and} \bibinfo{person}{Zhenyu Hu}.} \bibinfo{year}{2017}\natexlab{}.
\newblock \showarticletitle{Efficient algorithms for the dynamic pricing problem with reference price effect}.
\newblock \bibinfo{journal}{\emph{Management Science}} \bibinfo{volume}{63}, \bibinfo{number}{12} (\bibinfo{year}{2017}), \bibinfo{pages}{4389--4408}.
\newblock


\bibitem[Cohen et~al\mbox{.}(2020)]%
        {cohen2020efficient}
\bibfield{author}{\bibinfo{person}{Maxime~C Cohen}, \bibinfo{person}{Swati Gupta}, \bibinfo{person}{Jeremy~J Kalas}, {and} \bibinfo{person}{Georgia Perakis}.} \bibinfo{year}{2020}\natexlab{}.
\newblock \showarticletitle{An efficient algorithm for dynamic pricing using a graphical representation}.
\newblock \bibinfo{journal}{\emph{Production and Operations Management}} \bibinfo{volume}{29}, \bibinfo{number}{10} (\bibinfo{year}{2020}), \bibinfo{pages}{2326--2349}.
\newblock


\bibitem[Cohen-Hillel et~al\mbox{.}(2023)]%
        {cohen2023high}
\bibfield{author}{\bibinfo{person}{Tamar Cohen-Hillel}, \bibinfo{person}{Kiran Panchamgam}, {and} \bibinfo{person}{Georgia Perakis}.} \bibinfo{year}{2023}\natexlab{}.
\newblock \showarticletitle{High-low promotion policies for peak-end demand models}.
\newblock \bibinfo{journal}{\emph{Management Science}} \bibinfo{volume}{69}, \bibinfo{number}{4} (\bibinfo{year}{2023}), \bibinfo{pages}{2016--2050}.
\newblock


\bibitem[Dai et~al\mbox{.}(2024)]%
        {dai2024data}
\bibfield{author}{\bibinfo{person}{Jinglong Dai}, \bibinfo{person}{Hanwei Li}, \bibinfo{person}{Weiming Zhu}, \bibinfo{person}{Jianfeng Lin}, {and} \bibinfo{person}{Binqiang Huang}.} \bibinfo{year}{2024}\natexlab{}.
\newblock \showarticletitle{Data-Driven Real-time Coupon Allocation in the Online Platform}.
\newblock \bibinfo{journal}{\emph{arXiv preprint arXiv:2406.05987}} (\bibinfo{year}{2024}).
\newblock


\bibitem[den Boer and Keskin(2022)]%
        {den2022dynamic}
\bibfield{author}{\bibinfo{person}{Arnoud~V den Boer} {and} \bibinfo{person}{N~Bora Keskin}.} \bibinfo{year}{2022}\natexlab{}.
\newblock \showarticletitle{Dynamic pricing with demand learning and reference effects}.
\newblock \bibinfo{journal}{\emph{Management Science}} \bibinfo{volume}{68}, \bibinfo{number}{10} (\bibinfo{year}{2022}), \bibinfo{pages}{7112--7130}.
\newblock


\bibitem[Fibich et~al\mbox{.}(2003)]%
        {fibich2003explicit}
\bibfield{author}{\bibinfo{person}{Gadi Fibich}, \bibinfo{person}{Arieh Gavious}, {and} \bibinfo{person}{Oded Lowengart}.} \bibinfo{year}{2003}\natexlab{}.
\newblock \showarticletitle{Explicit solutions of optimization models and differential games with nonsmooth (asymmetric) reference-price effects}.
\newblock \bibinfo{journal}{\emph{Operations Research}} \bibinfo{volume}{51}, \bibinfo{number}{5} (\bibinfo{year}{2003}), \bibinfo{pages}{721--734}.
\newblock


\bibitem[Freund and Hssaine(2021)]%
        {freund2021fair}
\bibfield{author}{\bibinfo{person}{Daniel Freund} {and} \bibinfo{person}{Chamsi Hssaine}.} \bibinfo{year}{2021}\natexlab{}.
\newblock \showarticletitle{Fair incentives for repeated engagement}.
\newblock \bibinfo{journal}{\emph{arXiv preprint arXiv:2111.00002}} (\bibinfo{year}{2021}).
\newblock


\bibitem[Greenleaf(1995)]%
        {greenleaf1995impact}
\bibfield{author}{\bibinfo{person}{Eric~A Greenleaf}.} \bibinfo{year}{1995}\natexlab{}.
\newblock \showarticletitle{The impact of reference price effects on the profitability of price promotions}.
\newblock \bibinfo{journal}{\emph{Marketing science}} \bibinfo{volume}{14}, \bibinfo{number}{1} (\bibinfo{year}{1995}), \bibinfo{pages}{82--104}.
\newblock


\bibitem[Guo et~al\mbox{.}(2025)]%
        {guo2025multiproduct}
\bibfield{author}{\bibinfo{person}{Mengzi~Amy Guo}, \bibinfo{person}{Hansheng Jiang}, {and} \bibinfo{person}{Zuo-Jun~Max Shen}.} \bibinfo{year}{2025}\natexlab{}.
\newblock \showarticletitle{Multiproduct dynamic pricing with reference effects under logit demand}.
\newblock \bibinfo{journal}{\emph{Manufacturing \& Service Operations Management}} \bibinfo{volume}{27}, \bibinfo{number}{5} (\bibinfo{year}{2025}), \bibinfo{pages}{1645--1663}.
\newblock


\bibitem[Hu et~al\mbox{.}(2016)]%
        {hu2016dynamic}
\bibfield{author}{\bibinfo{person}{Zhenyu Hu}, \bibinfo{person}{Xin Chen}, {and} \bibinfo{person}{Peng Hu}.} \bibinfo{year}{2016}\natexlab{}.
\newblock \showarticletitle{Dynamic pricing with gain-seeking reference price effects}.
\newblock \bibinfo{journal}{\emph{Operations Research}} \bibinfo{volume}{64}, \bibinfo{number}{1} (\bibinfo{year}{2016}), \bibinfo{pages}{150--157}.
\newblock


\bibitem[Hu and Nasiry(2018)]%
        {hu2018markets}
\bibfield{author}{\bibinfo{person}{Zhenyu Hu} {and} \bibinfo{person}{Javad Nasiry}.} \bibinfo{year}{2018}\natexlab{}.
\newblock \showarticletitle{Are markets with loss-averse consumers more sensitive to losses?}
\newblock \bibinfo{journal}{\emph{Management Science}} \bibinfo{volume}{64}, \bibinfo{number}{3} (\bibinfo{year}{2018}), \bibinfo{pages}{1384--1395}.
\newblock


\bibitem[Jiang et~al\mbox{.}(2022)]%
        {jiang2022revenue}
\bibfield{author}{\bibinfo{person}{Bo Jiang}, \bibinfo{person}{Zizhuo Wang}, {and} \bibinfo{person}{Nanxi Zhang}.} \bibinfo{year}{2022}\natexlab{}.
\newblock \showarticletitle{Revenue Management Under a Price Alert Mechanism}.
\newblock \bibinfo{journal}{\emph{Available at SSRN 4154861}} (\bibinfo{year}{2022}).
\newblock


\bibitem[Jiang et~al\mbox{.}(2024)]%
        {jiang2024intertemporal}
\bibfield{author}{\bibinfo{person}{Hansheng Jiang}, \bibinfo{person}{Junyu Cao}, {and} \bibinfo{person}{Zuo-Jun~Max Shen}.} \bibinfo{year}{2024}\natexlab{}.
\newblock \showarticletitle{Intertemporal pricing via nonparametric estimation: Integrating reference effects and consumer heterogeneity}.
\newblock \bibinfo{journal}{\emph{Manufacturing \& Service Operations Management}} \bibinfo{volume}{26}, \bibinfo{number}{1} (\bibinfo{year}{2024}), \bibinfo{pages}{28--46}.
\newblock


\bibitem[Kahneman et~al\mbox{.}(1993)]%
        {kahneman1993more}
\bibfield{author}{\bibinfo{person}{Daniel Kahneman}, \bibinfo{person}{Barbara~L Fredrickson}, \bibinfo{person}{Charles~A Schreiber}, {and} \bibinfo{person}{Donald~A Redelmeier}.} \bibinfo{year}{1993}\natexlab{}.
\newblock \showarticletitle{When more pain is preferred to less: Adding a better end}.
\newblock \bibinfo{journal}{\emph{Psychological science}} \bibinfo{volume}{4}, \bibinfo{number}{6} (\bibinfo{year}{1993}), \bibinfo{pages}{401--405}.
\newblock


\bibitem[Liu(2023)]%
        {liu2023dynamic}
\bibfield{author}{\bibinfo{person}{Xiao Liu}.} \bibinfo{year}{2023}\natexlab{}.
\newblock \showarticletitle{Dynamic coupon targeting using batch deep reinforcement learning: An application to livestream shopping}.
\newblock \bibinfo{journal}{\emph{Marketing Science}} \bibinfo{volume}{42}, \bibinfo{number}{4} (\bibinfo{year}{2023}), \bibinfo{pages}{637--658}.
\newblock


\bibitem[Liu and Cooper(2015)]%
        {liu2015optimal}
\bibfield{author}{\bibinfo{person}{Yan Liu} {and} \bibinfo{person}{William~L Cooper}.} \bibinfo{year}{2015}\natexlab{}.
\newblock \showarticletitle{Optimal dynamic pricing with patient customers}.
\newblock \bibinfo{journal}{\emph{Operations research}} \bibinfo{volume}{63}, \bibinfo{number}{6} (\bibinfo{year}{2015}), \bibinfo{pages}{1307--1319}.
\newblock


\bibitem[Lobel(2020)]%
        {lobel2020dynamic}
\bibfield{author}{\bibinfo{person}{Ilan Lobel}.} \bibinfo{year}{2020}\natexlab{}.
\newblock \showarticletitle{Dynamic pricing with heterogeneous patience levels}.
\newblock \bibinfo{journal}{\emph{Operations Research}} \bibinfo{volume}{68}, \bibinfo{number}{4} (\bibinfo{year}{2020}), \bibinfo{pages}{1038--1046}.
\newblock


\bibitem[Mazumdar et~al\mbox{.}(2005)]%
        {mazumdar2005reference}
\bibfield{author}{\bibinfo{person}{Tridib Mazumdar}, \bibinfo{person}{Sevilimedu~P Raj}, {and} \bibinfo{person}{Indrajit Sinha}.} \bibinfo{year}{2005}\natexlab{}.
\newblock \showarticletitle{Reference price research: Review and propositions}.
\newblock \bibinfo{journal}{\emph{Journal of marketing}} \bibinfo{volume}{69}, \bibinfo{number}{4} (\bibinfo{year}{2005}), \bibinfo{pages}{84--102}.
\newblock


\bibitem[Nasiry and Popescu(2011)]%
        {nasiry2011dynamic}
\bibfield{author}{\bibinfo{person}{Javad Nasiry} {and} \bibinfo{person}{Ioana Popescu}.} \bibinfo{year}{2011}\natexlab{}.
\newblock \showarticletitle{Dynamic pricing with loss-averse consumers and peak-end anchoring}.
\newblock \bibinfo{journal}{\emph{Operations research}} \bibinfo{volume}{59}, \bibinfo{number}{6} (\bibinfo{year}{2011}), \bibinfo{pages}{1361--1368}.
\newblock


\bibitem[Popescu and Wu(2007)]%
        {popescu2007dynamic}
\bibfield{author}{\bibinfo{person}{Ioana Popescu} {and} \bibinfo{person}{Yaozhong Wu}.} \bibinfo{year}{2007}\natexlab{}.
\newblock \showarticletitle{Dynamic pricing strategies with reference effects}.
\newblock \bibinfo{journal}{\emph{Operations research}} \bibinfo{volume}{55}, \bibinfo{number}{3} (\bibinfo{year}{2007}), \bibinfo{pages}{413--429}.
\newblock


\bibitem[Puterman(2014)]%
        {puterman2014markov}
\bibfield{author}{\bibinfo{person}{Martin~L Puterman}.} \bibinfo{year}{2014}\natexlab{}.
\newblock \bibinfo{booktitle}{\emph{Markov decision processes: discrete stochastic dynamic programming}}.
\newblock \bibinfo{publisher}{John Wiley \& Sons}.
\newblock


\bibitem[Shmoys and Wang(2019)]%
        {shmoys2019how}
\bibfield{author}{\bibinfo{person}{David Shmoys} {and} \bibinfo{person}{Shujing Wang}.} \bibinfo{year}{2019}\natexlab{}.
\newblock \bibinfo{title}{How to solve a linear optimization problem on incentive allocation?}  (\bibinfo{year}{2019}).
\newblock
\urldef\tempurl%
\url{https://eng.lyft.com/how-to-solve-a-linear-optimization-problem-on-incentive-allocation-5a8fb5d04db1}
\showURL{%
\tempurl}
\newblock
\shownote{Lyft Engineering blog}.


\bibitem[Utley(2024)]%
        {alizila2024taobao_tmall_ai}
\bibfield{author}{\bibinfo{person}{Elizabeth Utley}.} \bibinfo{year}{2024}\natexlab{}.
\newblock \bibinfo{title}{Taobao and Tmall Upgrades Consumer Shopping Experience and Merchant Support Through AI}.  (\bibinfo{year}{2024}).
\newblock
\urldef\tempurl%
\url{https://www.alizila.com/taobao-and-tmall-upgrades-consumer-shopping-experience-and-merchant-support-through-ai/}
\showURL{%
\tempurl}
\newblock
\shownote{Alizila}.


\bibitem[Wang(2016)]%
        {wang2016intertemporal}
\bibfield{author}{\bibinfo{person}{Zizhuo Wang}.} \bibinfo{year}{2016}\natexlab{}.
\newblock \showarticletitle{Intertemporal price discrimination via reference price effects}.
\newblock \bibinfo{journal}{\emph{Operations research}} \bibinfo{volume}{64}, \bibinfo{number}{2} (\bibinfo{year}{2016}), \bibinfo{pages}{290--296}.
\newblock


\bibitem[Zhou(2023)]%
        {zhou2023temu_game}
\bibfield{author}{\bibinfo{person}{Viola Zhou}.} \bibinfo{year}{2023}\natexlab{}.
\newblock \bibinfo{title}{How Temu topped the U.S. app charts by turning shopping into a game}.  (\bibinfo{year}{2023}).
\newblock
\urldef\tempurl%
\url{https://restofworld.org/2023/temu-mobile-gaming/}
\showURL{%
\tempurl}
\newblock
\shownote{Rest of World}.


\end{thebibliography}

\appendix


\section{Processed Features for Prediction Model} \label{sec:practical_method}

\begin{table}[h]
\centering
\small
\setlength{\tabcolsep}{6pt}
\renewcommand{\arraystretch}{1.15}
\begin{tabularx}{\linewidth}{@{}>{\raggedright\arraybackslash}p{0.18\linewidth} X@{}}
\toprule
\textbf{Category} & \textbf{Features} \\
\midrule

Purchasing &
\begin{minipage}[t]{\linewidth}
Total purchase amount in the last 30/360 days\\
Number of transactions in the last 3/7/30/360 days\\
Average site sale discount of purchases in the last 30/360 days \\
Number of items in the cart\\
Total value of products in the cart\\
Average site sale discount in the cart
\end{minipage}
\\

\vspace{0.1\baselineskip}
Coupon &
\vspace{0.1\baselineskip}
\begin{minipage}[t]{\linewidth}
Number of free shipping coupons in last 7/28 days\\
Number of stackable coupons in last 7/28 days\\
Coupon value received 1/2 day ago\\
Largest coupon in the last 3/7/28 days\\
Average coupon in the last 3/7/28/30 days\\
Median coupon in the last 28 days\\
Variance of coupon values in the last 7/28 days\\
Average coupon clicked in the last 7/30 days\\
Average coupon opened in the last 7/30 days \\
Average coupon discount of purchases in the last 30/360 days \\
Average coupon discount of all historical purchases \\
Percentage of historical orders where a coupon was used
\end{minipage}
\\

\vspace{0.1\baselineskip}
Email &
\begin{minipage}[t]{\linewidth}
\vspace{0.1\baselineskip}
Email opened 1/2 day ago\\
Number of emails opened in last 3/7/28 days\\
Whether all emails were opened in the last 28 days\\
Email clicked 1/2 day ago\\
Number of emails clicked in last 3/7/28 days
\end{minipage}
\\

\vspace{0.1\baselineskip}
Website &
\begin{minipage}[t]{\linewidth}
\vspace{0.1\baselineskip}
Number of times viewed webpage in the last 1/3/7/30 days\\
Number of times viewed cart in the last 1/3/7/30 days\\
Number of days viewed cart in the last 1/3/7/30 days\\
Number of times viewed a product in the last 7 days\\
Number of times viewed checkout in the last 7 days
\end{minipage}
\\

\vspace{0.1\baselineskip}
Time &
\begin{minipage}[t]{\linewidth}
\vspace{0.1\baselineskip}
Day of week
\end{minipage}
\\

\bottomrule
\end{tabularx}
\caption{List of features used in the prediction model. Time windows written with slashes (e.g., 3/7/28 days) indicate multiple distinct features, one computed for each listed window length.}
\label{tab:all_features}
\end{table}

\section{Proof of Proposition~\ref{prop:simple}} \label{pf:simple}

Suppose $0\le\lambda\le\lambda'$ and $v,v'\in\cV$ with $v\le v'$.
If for customer $i$, the bigger discount $v'$ is preferred over $v$ under $\lambda'$, i.e.~$(1-\lambda' v')q(x_{it},v')\ge (1-\lambda' v)q(x_{it},v)$, then we have
$$
\frac{1-\lambda' v'}{1-\lambda' v}\ge\frac{q(x_{it},v)}{q(x_{it},v')}.
$$
We know however that $\frac{1-\lambda v'}{1-\lambda v}\ge\frac{1-\lambda' v'}{1-\lambda' v}$ by the rearrangement inequality, which implies that $(1-\lambda v')q(x_{it},v')\ge (1-\lambda v)q(x_{it},v)$, i.e.~$v'$ is also preferred over $v$ under $\lambda$.  Therefore, increasing $\lambda_t$ from $\lambda$ to $\lambda'$ cannot cause $v_{it}$ to increase for any customer $i$.

Under the assumption that $q(x_{it},v)$ is increasing in $v$, it is clear that~\eqref{eqn:discRedeemed} is increasing in $v_{it}$ for all $i$, and hence decreasing in $\lambda_t$.

\section{Proof of Theorem~\ref{thm:tightness}} \label{pf:tightness}

We can without loss assume $\cP=\{\rho_0,\ldots,\rho_{d-1}\}$, by creating arbitrarily negative values for $g(r,p)$ when $p\notin\{\rho_0,\ldots,\rho_{d-1}\}$.  For convenience, we can denote the price set to be $\cP=\{1,\ldots,d\}$.

Take any real values $h(1)<\cdots<h(d)$.  Define
\begin{align} \label{eqn1}
g(\rho_t) &:=\frac{h(\rho_{t-1\mod d})-h(\rho_t)}{k(\rho_{t-1\mod d},\rho_t)}+C &\forall t=0,\ldots,d-1,
\end{align}
where $k(\cdot,\cdot)$ is defined as in the statement of \Cref{thm:tightness}, and $C$ is a large positive constant that ensures $g(\rho_t)>0$ for all $t=0,\ldots,d-1$ (it suffices if $\frac{h(1)-h(d)}{\ell}+C>0$).
Finally, define the reference-monotone gain function $g$ to be
\begin{align} \label{eqn2}
g(r,\rho_t)
&=g(\rho_t)\bI(r\ge\rho_{t-1\mod d})
&\forall t=0,\ldots,d-1; r\in\cP.
\end{align}

We show for this gain function $g$ that the unique optimal price cycle is $\rho_0^{k(\rho_{d-1},\rho_0)}\rho_1^{k(\rho_{0},\rho_1)}\cdots\rho_{d-1}^{k(\rho_{d-2},\rho_{d-1})}$, as required for the statement of \Cref{thm:tightness}.
It suffices to show that if substitute these values of $h(1),\ldots,h(d)$ into the optimality condition~\eqref{eqn:bellman}, along with $\OPT=C$ (the long-run average gain of the optimal price cycle), then the "max" is achieved if and only if $r=\rho_{t-1\mod d},p=\rho_t$ for some $t=0,\ldots,d-1$.  In other words, we need to prove
\begin{align*}
h(\rho_{t-1\mod d}) &\ge (g(\rho_{t-1\mod d},\rho_{t'})-C)k(\rho_{t-1\mod d},\rho_{t'})+h(\rho_{t'}) &\forall t,t'\in\{0,\ldots,d-1\}
\end{align*}
or equivalently
\begin{align} \label{eqn:toShow}
g(\rho_{t-1\mod d},\rho_{t'}) &\le \frac{h(\rho_{t-1\mod d})-h(\rho_{t'})}{k(\rho_{t-1\mod d},\rho_{t'})}+C  &\forall t,t'\in\{0,\ldots,d-1\}
\end{align}
with equality if and only if $t'=t$.

If $t'=t$, then equality holds by the definitions in~\eqref{eqn1}--\eqref{eqn2}.
Now suppose $t'\neq t$, and first consider the case where $\rho_{t-1\mod d}<\rho_{t'-1\mod d}$.
We have
\begin{align*}
g(\rho_{t-1\mod d},\rho_{t'})
=0
<\frac{h(1)-h(d)}{\ell}+C
\le\frac{h(\rho_{t-1\mod d})-h(\rho_{t'})}{k(\rho_{t-1\mod d},\rho_{t'})}+C
\end{align*}
as desired, where the final (weak) inequality holds because the smallest possible value of $\frac{h(\rho_{t-1\mod d})-h(\rho_{t'})}{k(\rho_{t-1\mod d},\rho_{t'})}$ is $\frac{h(1)-h(d)}{\ell}$.
In the other case where $\rho_{t-1\mod d}>\rho_{t'-1\mod d}$, we have
\begin{align*}
g(\rho_{t-1\mod d},\rho_{t'})=g(\rho_{t'})
=\frac{h(\rho_{t'-1\mod d})-h(\rho_{t'})}{k(\rho_{t'-1\mod d},\rho_{t'})}+C
<\frac{h(\rho_{t-1\mod d})-h(\rho_{t'})}{k(\rho_{t-1\mod d},\rho_{t'})}+C
\end{align*}
as desired, where the inequality holds because $\frac{h(p)-h(\rho_{t'})}{k(p,\rho_{t'})}$ is a strictly increasing function over the prices $p\in\cP$ and $\rho_{t-1\mod d}>\rho_{t'-1\mod d}$.
This completes the proof of \Cref{thm:tightness}.

\end{document}